\documentclass[runningheads]{llncs}


\usepackage[hidelinks]{hyperref}
\usepackage{amsmath,amssymb,amsthm,mathtools,stmaryrd}

\usepackage{nicefrac}

\usepackage{xcolor}
\usepackage{tikz}
  \usetikzlibrary{arrows.meta}
  \usetikzlibrary{calc}
  \usetikzlibrary{matrix}
\usepackage{cleveref}
\usepackage{proof}

\makeatletter


\theoremstyle{definition}
\newtheorem{defn}{Definition}

\theoremstyle{plain}

\newtheorem{cor}  [defn]{Corollary}
\newtheorem{lem}  [defn]{Lemma}
\newtheorem{prop} [defn]{Proposition}
\newtheorem{thm}  [defn]{Theorem}
\newtheorem{lemmaAppendix}{Lemma}

\newcommand{\ie}{\textit{i.e.}}
\newcommand{\ih}{\textit{i.h.}}

\newcommand*\coloneq{
 \mathrel{%
  \rlap{\raisebox{0.3ex}{$\m@th\cdot$}}%
        \raisebox{-0.3ex}{$\m@th\cdot$}%
 {=}}}
\newcommand*\coloneqq{
 \mathrel{%
  \rlap{\raisebox{0.3ex}{$\m@th\cdot$}}%
        \raisebox{-0.3ex}{$\m@th\cdot$}%
  \rlap{\raisebox{0.3ex}{$\m@th\cdot$}}%
        \raisebox{-0.3ex}{$\m@th\cdot$}%
 {=}}}

\newcommand\vc[1]{\vcenter{\hbox{$#1$}}}


\newcommand\smallbin[1]{\mathchoice
      {\mathbin{\raise.2ex \hbox{$\scriptstyle      #1$}}}%
      {\mathbin{\raise.2ex \hbox{$\scriptstyle      #1$}}}%
      {\mathbin{\raise.12ex\hbox{$\scriptscriptstyle#1$}}}%
      {\mathbin{           \hbox{$\scriptscriptstyle#1$}}}}%

\newcommand\smallsquare{\mathchoice{{\scriptstyle\square}}{{\scriptstyle\square}}{{\scriptscriptstyle\square}}{{\scriptscriptstyle\square}}}

\newcommand\Con{\wedge}
\newcommand\Imp{\rightarrow}

\newcommand\con{\kern1pt{\smallbin\Con}\kern1pt}
\newcommand\imp{\kern1pt{\smallbin\Imp}}


\newcommand\PEL{\Lambda_{\textsf{PE}}}

\newcommand\fv[1]{\mathsf{fv}(\trm{#1})}
\newcommand\fl[1]{\mathsf{fl}(\trm{#1})}

\colorlet{mgray}{black!40}
\colorlet{lgray}{black!25}
\colorlet{llgray}{black!15}

\colorlet{dblue}{blue!80!black}
\colorlet{dred}{red!80!black}

\colorlet{typecolor}{dblue}
\colorlet{termcolor}{dred}

\newcommand\typecolor{\color{typecolor}}
\newcommand\termcolor{\color{termcolor}}

\newcommand\black{\color{black}}

\newcommand\type[1]{{\let\type@sup@color\termcolor\typecolor\typ{#1}}}
\newcommand\typ[1]{%
  \let\type@loop=\type@next%
  \type@loop#1,%
}
\newcommand\type@next[1]{%
  \ifx#1,\let\type@loop\type@end\else%
  \ifx#1_\let\type@loop\type@sub\else%
  \ifx#1^\let\type@loop\type@sup\else%
  \ifx#1*\con\else%
  \ifx#1-\kern1pt{\imp}\else%
  #1%
  \fi\fi\fi\fi\fi%
  \type@loop%
}
\newcommand\type@sup@color{}
\newcommand\type@sub[1]{_{#1}\let\type@loop\type@next\type@loop}
\newcommand\type@sup[1]{^{{\type@sup@color #1}}\let\type@loop\type@next\type@loop}
\newcommand\type@vec[1]{\vec{\kern.5pt#1\kern.5pt}\let\type@loop\type@next\type@loop}
\newcommand\type@end{\let\type@sup@color\relax}

\newcommand\x{\lambda x}

\newcommand\+[1][{}]{\kern1pt{\smallbin\oplus}_{#1}\kern1pt}

\newcommand\lab{\bullet}

\newcommand\ttrm[1]{\smash{\trm{#1}}}

\newcommand\term[1]{{\let\term@typecolor\typecolor\termcolor\trm{#1}}}
\newcommand\trm[1]{%
  \vphantom(%
  \let\term@loop=\term@next%
  \term@loop#1,%
}
\newcommand\term@next[1]{%
  \ifx#1,\let\term@loop\term@end\else%
  \ifx#1:\black\colon\term@typecolor\let\term@loop\term@type\else%
  \ifx#1_\let\term@loop\term@sub\else%
  \ifx#1^\let\term@loop\term@sup\else%
  \ifx#1!\let\term@loop\term@box\else%
  \ifx#1+\let\term@loop\term@prob\else%
  \ifx#1*^\lab\else%
  \ifx#1<\lfloor\else%
  \ifx#1>\rfloor\else%
  \ifx#1..\,\else%
  \ifx#1=\kern1pt{\smallbin=}\kern1pt\else
  #1%
  \fi\fi\fi\fi\fi\fi\fi\fi\fi\fi\fi
  \term@loop%
}
\newcommand\term@typecolor{}
\newcommand\term@end{\let\term@typecolor\relax}
\newcommand\term@sub[1]{_{#1}\let\term@loop\term@next\term@loop}
\newcommand\term@sup[1]{^{#1}\let\term@loop\term@next\term@loop}
\newcommand\term@vec[1]{\vec{\kern.5pt#1\kern.5pt}\let\term@loop\term@next\term@loop}
\newcommand\term@prob[1]{\kern1pt\raisebox{-.5pt}{$\overset{\raisebox{-1pt}{$\scriptstyle#1$}}{{\smallbin\oplus}}$}\kern1pt\let\term@loop\term@next\term@loop}
\newcommand\term@type{\let\type@loop=\type@next\type@loop}
\newcommand\term@box[1]{\probox{#1}\let\term@loop\term@next\term@loop}

\newcommand\probox[1]{\begin{tikzpicture}[baseline=0]\node[anchor=base](a){$\scriptstyle #1\vphantom)$};\draw[line width=.6pt] (-5pt,-2.5pt) rectangle (5pt,7.5pt);\end{tikzpicture}}

\newcommand{\labjudg}[2]{#1\vdash_{L} #2}

\newcommand\rw[1][{}]{\stackrel{#1}\rightsquigarrow}

\newcommand\perm{\mathsf p}

\newcommand\rstrut{\rule{0pt}{13pt}}

\newcommand\SN{\textsf{SN}}

\newcommand\proj[3]{\pi^{#1}_{#2}(\trm{#3})}

\newcommand\cbn{\mathsf{cbn}}
\newcommand\cbv{\mathsf{cbv}}

\newcommand\uncbv[1]{\llbracket#1\rrbracket_\cbv}
\newcommand\unopen[1]{\llbracket#1\rrbracket_{\mathsf{open}}}
\newcommand\labclose[2]{\lfloor\labjudg{#1}{#2}\rfloor}


\tikzstyle{implied}=[dashed]
\tikzstyle{rwhead}=[>/.tip={Triangle[open,length=2.5pt,width=4.5pt]},|/.tip={Rectangle[length=.5pt,width=4.5pt]}]
\tikzstyle{rwblack}=[>/.tip={Triangle[length=2.5pt,width=4.5pt]},|/.tip={Rectangle[length=.5pt,width=4.5pt]}]
\tikzstyle{rw} =[line width=.5pt,rwhead,->]
\tikzstyle{rwb}=[line width=.5pt,rwblack,->]
\tikzstyle{rwbs}=[line width=.5pt,rwblack,->.>]
\tikzstyle{rws}=[line width=.5pt,rwhead,->.>]
\tikzstyle{rwn}=[line width=.5pt,rwhead,->.>|]
\tikzstyle{rwbn}=[line width=.5pt,rwblack,->.>|]
\tikzstyle{rwp}=[line width=.5pt,rwhead,->,double]
\tikzstyle{rwx}=[line width=.5pt,rwhead,->.>,double]

\renewcommand\rw{\mathrel{\tikz\draw[rw](0,0)--(10pt,0pt);}}
\newcommand\rwb{\mathrel{\tikz\draw[rwb](0,0)--(10pt,0pt);}}
\newcommand\rwbs{\mathrel{\tikz\draw[rwbs](0,0)--(10pt,0pt);}}
\newcommand\rws{\mathrel{\tikz\draw[rws](0,0)--(10pt,0pt);}}
\newcommand\rwn{\mathrel{\tikz\draw[rwn](0,0)--(10pt,0pt);}}
\newcommand\rwbn{\mathrel{\tikz\draw[rwbn](0,0)--(10pt,0pt);}}
\newcommand\rwp{\mathrel{\tikz\draw[rwp](0,0)--(10pt,0pt);}}
\newcommand\rwx{\mathrel{\tikz\draw[rwx](0,0)--(10pt,0pt);}}

\newcommand\rwpleft{\mathrel{\tikz\draw[rwp](10pt,0pt)--(0,0);}}

\newcommand\rwsleft{\mathrel{\tikz\draw[rws](10pt,0pt)--(0,0);}}


\newcommand\idem{\ensuremath{\mathsf i}}
\newcommand\cancelL{\ensuremath{\mathsf c_1}}
\newcommand\cancelR{\ensuremath{\mathsf c_2}}
\newcommand\plusAbs{\ensuremath{{\smallbin\oplus}\lambda}}
\newcommand\plusArg{\ensuremath{{\smallbin\oplus}\mathsf a}}
\newcommand\plusFun{\ensuremath{{\smallbin\oplus}\mathsf f}}
\newcommand\plusBox{\ensuremath{{\smallbin{\oplus}\smallsquare}}}
\newcommand\plusL{\ensuremath{{\smallbin\oplus}{\smallbin\oplus}_1}}
\newcommand\plusR{\ensuremath{{\smallbin\oplus}{\smallbin\oplus}_2}}
\newcommand\plusX{\ensuremath{{\smallbin\oplus}{\star}}}
\newcommand\boxVoid{\ensuremath{\not{\kern-2pt\smallsquare}}}
\newcommand\boxAbs{\ensuremath{\smallsquare\lambda}}
\newcommand\boxFun{\ensuremath{\smallsquare\mathsf f}}

\makeatother


\title{Decomposing Probabilistic Lambda-calculi}


\author{
	 Ugo Dal Lago\inst1
\and Giulio Guerrieri\inst2
\and Willem Heijltjes\inst2
}

\authorrunning{U.\ Dal Lago et al.}

\institute{%
Dipartimento di Informatica - Scienza e Ingegneria\\ Universit\`a di Bologna, Bologna, Italy\\
\email{ugo.dallago@unibo.it}\\[10pt]
\and%
Department of Computer Science\\ University of Bath, Bath, UK\\
\email{\{w.b.heijltjes,g.guerrieri\}@bath.ac.uk}
}


\begin{document}

\maketitle

\begin{abstract}
A notion of probabilistic lambda-calculus usually comes with a prescribed reduction strategy, typically call-by-name or call-by-value, as the calculus is non-confluent and these strategies yield different results. This is a break with one of the main advantages of lambda-calculus: confluence, which means results are independent from the choice of strategy.
We present a probabilistic lambda-calculus where the probabilistic operator is decomposed into two syntactic constructs: a generator, which represents a probabilistic event; and a consumer, which acts on the term depending on a given event. The resulting calculus, the Probabilistic Event Lambda-Calculus, is confluent, and interprets the call-by-name and call-by-value strategies through different interpretations of the probabilistic operator into our generator and consumer constructs.
We present two notions of reduction, one via fine-grained local rewrite steps, and one by generation and consumption of probabilistic events. Simple types for the calculus are essentially standard, and they convey strong normalization. We demonstrate how we can encode call-by-name and call-by-value probabilistic evaluation.
\end{abstract}

\section{Introduction}

Probabilistic lambda-calculi \cite{SahebDjahromi78,Manber-Tompa-1982,JonesPlotkin89,deLiguoroPiperno95,JungTix98,DalLagoZorzi12,FaggianRonchi19} extend the lambda-calculus with a probabilistic choice operator $N\+[p]M$, which chooses $N$ with probability $p$ and $M$ with probability $1-p$ (throughout this paper, we let $p=0.5$ and will omit it). Duplication of $N\+M$, as is wont to happen in lambda-calculus, raises a fundamental question about its semantics: do the duplicate occurrences represent \emph{the same} probabilistic event, or \emph{different} ones with the same probability? For example, take the formula $\top\+\bot$ that represents a coin flip between boolean values \emph{true} $\top$ and \emph{false} $\bot$. If we duplicate this formula, do the copies represent two distinct coin flips with possibly distinct outcomes, or do these represent a single coin flip that determines the outcome for both copies? Put differently again, when we duplicate $\top\+\bot$, do we duplicate the \emph{event}, or only its \emph{outcome}?

In probabilistic lambda-calculus, these two interpretations are captured by the evaluation strategies of call-by-name ($\rw_\cbn$), which duplicates events, and call-by-value ($\rw_\cbv$), which evaluates any probabilistic choice before it is duplicated, and thus only duplicates outcomes. Consider the following example, where $=$ tests equality of boolean values.
\[
	\top \quad {}_\cbv\!\rwsleft \quad (\x.\,x = x)(\top\+\bot) \quad \rws_\cbn \quad \trm{\top\+\bot} 
\]
This situation is not ideal, for several, related reasons. First, it demonstrates how probabilistic lambda-calculus is non-confluent, negating one of the central properties of the lambda-calculus, and one of the main reasons why it is the prominent model of computation that it is. Second, a probabilistic lambda-calculus must derive its semantics from a prescribed reduction strategy, and its terms only have meaning in the context of that strategy. Third, combining different kinds of probabilities becomes highly involved~\cite{FaggianRonchi19}, as it would require specialized reduction strategies. These issues present themselves even in a more general setting, namely that of commutative (algebraic) effects, which in general do not commute with copying.

We address these issues by a decomposition of the probabilistic operator into a \emph{generator} $\ttrm{!a}$ and a \emph{choice} $\ttrm{+a}$, as follows.
\[
	\trm{N \+ M} \quad\stackrel\Delta=\quad \trm{!a. N +a M}
\]
Semantically, $\ttrm{!a}$ represents a probabilistic event, that generates a boolean value recorded as $a$. The choice $\ttrm{N+aM}$ is simply a conditional on $a$, choosing $N$ if $a$ is false and $M$ if $a$ is true. Syntactically, $a$ is a boolean variable with an occurrence in $\ttrm{+a}$, and $\ttrm{!a}$ acts as a quantifier, binding all occurrences in its scope. (To capture a non-equal chance, one would attach a probability $p$ to a generator, as $\ttrm{!a}{\kern1pt}_p$, though we will not do so in this paper.) 

The resulting \emph{probabilistic event lambda-calculus} $\PEL$, which we present in this paper, is confluent. Our decomposition allows us to separate duplicating an \emph{event}, represented by the generator $\ttrm{!a}$, from duplicating only its \emph{outcome} $a$, through having multiple choice operators $\trm{+a}$. In this way our calculus may interpret both original strategies, call-by-name and call-by-value, by different translations of standard probabilistic terms into $\PEL$. For the above example, we get the following translations and reductions.
\[
\begin{array}{l@{\quad}l@{\quad}l@{\quad}l@{\quad}l@{\quad}l}
	\cbn: & \trm{(\x.x = x)(!a.\top+a\bot)} & \rw_\beta & \trm{(!a.\top+a\bot)=(!b.\top+b\bot)} & \rws & \top\+\bot
\\[10pt]
	\cbv: & \trm{!a.(\x.x = x)(\top+a\bot)} & \rw_\beta & \trm{!a.(\top+a\bot)=(\top+a\bot)}    & \rws & \top
\end{array}
\]
In this paper, we introduce $\PEL$ and its reduction mechanisms (Sections~\ref{sec:PEL},~\ref{sec:p-reduction},~\ref{sec:projective-reduction}); we prove confluence (Section~\ref{sec:confluence}); we give a system of simple types and prove strong normalization for typed terms (Section~\ref{sec:SN}); and we demonstrate the translation to interpret call-by-value evaluation (Section~\ref{sec:cbv}).

\subsection{Related work}

Probabilistic $\lambda$-calculi are a topic of study since the pioneering work by Saheb-Djaromi~\cite{SahebDjahromi78}, the first to give the syntax and operational semantics of a $\lambda$-calculus with binary probabilistic choice. Giving well-behaved denotational models for probabilistic $\lambda$-calculi has proved to be challenging, as witnessed by the many contributions spanning the last thirty years: from Jones and Plotkin early study of the probabilistic powerdomain~\cite{JonesPlotkin89}, to Jung and Tix's remarkable (and mostly negative) observations~\cite{JungTix98}, to the very recent encouraging results by Goubault-Larrecq~\cite{GoubaultLarrecq19}. A particularly well-behaved model for probabilistic $\lambda$-calculus can be obtained by taking a probabilistic variation of Girard's coherent spaces~\cite{DanosEhrhard11}, this way getting full abstraction~\cite{EPT18}.

On the operational side, one could mention a study about the various ways the operational semantics of a calculus with binary probabilistic choice can be specified, namely by small-step or big-step semantics, or by inductively or coinductively defined sets of rules~\cite{DalLagoZorzi12}. Termination and complexity analysis of higher-order probabilistic programs seen as $\lambda$-terms have been studied by way of type systems in a series of recent results about size~\cite{DalLagoGrellois19}, intersection~\cite{BreuvartDalLago18}, and refinement type disciplines \cite{AvanziniDalLagoGhyselen19}. Contextual equivalence on probabilistic $\lambda$-calculi has been studied, and compared with equational theories induced by B\"ohm Trees~\cite{Leventis18}, applicative bisimilarity~\cite{DalLagoSangiorgiAlberti14}, or environmental bisimilarity~\cite{SangiorgiVignudelli16}.

In all the aforementioned works, probabilistic $\lambda$-calculi have been taken as implicitly endowed with either call-by-name or call-by-value strategies, for the reasons outlined above. There are only a few exceptions, namely some works on Geometry of Interaction~\cite{DLFVY17}, Probabilistic Coherent Spaces~\cite{EhrhardTasson19}, and Standardization~\cite{FaggianRonchi19}, which achieve, in different contexts, a certain degree of independence from the underlying strategy, thus accomodating both call-by-name and call-by-value evaluation. The way this is achieved, however, invariably relies on Linear Logic or related concepts. This is deeply different from what we do here.


Our permutative reduction implements the equational theory of \emph{(ordered) binary decision trees} via rewriting~\cite{Zantema-Pol-2001}. Probabilistic decision trees have been proposed with a primitive binary probabilistic operator~\cite{Manber-Tompa-1982}, but not a decomposition as we explore here.


\section{\texorpdfstring{The probabilistic event $\lambda$-calculus $\PEL$}{The probabilistic event lambda-claculus PEL}}
\label{sec:PEL}

\begin{defn}
The \emph{probabilistic event $\lambda$-calculus} ($\PEL$) is given by the following grammar, with from left to right: a \emph{variable}, an \emph{abstraction}, an \emph{application}, a \emph{(labelled) choice}, and a \emph{(probabilistic) generator}.
\[
	M,N \quad\coloneqq\quad x ~\mid~ \x.N ~\mid~ NM ~\mid~ \trm{N +a M} ~\mid~ \trm{!a.N}
\]
\end{defn}
In a term $\trm{\x.M}$ the abstraction $\x$ binds the free occurrences of the variable $x$ in its scope $\trm{M}$, and in $\ttrm{!a.N}$ the generator $\ttrm{!a}$ binds the \emph{label} $a$ in $\trm{M}$. The calculus features a decomposition of the usual probabilistic sum $\trm{\+}$, as follows.
\[
	N\+M \quad\stackrel\Delta=\quad \trm{!a. N +a M}
\]
The generator $\ttrm{!a}$ represents a probabilistic \emph{event}, whose outcome, a binary value $\{0,1\}$ represented by the label $a$, is used by the choice operator $\ttrm{+a}$. That is, $\ttrm{!a}$ flips a coin setting $a$ to $0$ or $1$, and depending on this $\ttrm{N+aM}$ reduces to $N$ respectively $M$. We will use the unlabelled choice $\+$ as the above abbreviation. This convention also gives the translation from a \emph{call-by-name} probabilistic lambda-calculus into $\PEL$ (we formalize the interpretation of a \emph{call-by-value} probabilistic calculus in Section~\ref{sec:cbv}).

\begin{figure}[!ht]
  \fbox{
\begin{minipage}{.97\textwidth}
\begin{align}
	(\x.N)M 				&\rw_\beta N[M/x]													\tag{$\beta$}
\\																								\notag
\\	\trm{N +a N}			&\rw_\perm N														\tag{\idem}
\\	\trm{(N +a M) +a P}		&\rw_\perm \trm{N +a P}					\rstrut						\tag{\cancelL}
\\	\trm{N +a (M +a P)}		&\rw_\perm \trm{N +a P}					\rstrut						\tag{\cancelR}
\\																								\notag
\\	\trm{\x.(N +a M)}		&\rw_\perm \trm{(\x.N) +a (\x.M)}									\tag{\plusAbs}
\\	\trm{(N +a M) P}		&\rw_\perm \trm{(NP) +a (MP)}			\rstrut						\tag{\plusFun}
\\	\trm{N (M +a P)}		&\rw_\perm \trm{(NM) +a (NP)}			\rstrut						\tag{\plusArg}
\\	\trm{(N +a M) +b P}		&\rw_\perm \trm{(N +b P) +a (M +b P)} 	\rstrut	&& (a\smallbin<b)	\tag{\plusL}
\\	\trm{N +b (M +a P)}		&\rw_\perm \trm{(N +b M) +a (N +b P)} 	\rstrut	&& (a\smallbin<b)	\tag{\plusR}
\\	\trm{!b.(N +a M)}		&\rw_\perm \trm{(!b.N) +a (!b.M)}		\rstrut	&& (a\neq b)		\tag{\plusBox}
\\																								\notag
\\	\trm{!a.N}				&\rw_\perm N 									&& (a\notin N)		\tag{\boxVoid}
\\	\trm{\x.!a.N} 			&\rw_\perm \trm{!a.\x. N}				\rstrut						\tag{\boxAbs}
\\	\trm{(!a.N)M}			&\rw_\perm \trm{!a.(NM)}				\rstrut						\tag{\boxFun}
\end{align}
\end{minipage}}
\caption{Reduction rules}
\label{fig:reduction rules}
\end{figure}

\subsection{Reduction}

Reduction in $\PEL$ will consist of standard $\beta$-reduction $\rw_\beta$ plus an evaluation mechanism for generators and choice operators, which implements probabilistic choice. We will present two such mechanisms: \emph{projective} reduction~$\rw_\pi$ and \emph{permutative} reduction~$\rw_\perm$. While projective reduction implements the given intuition for the generator and choice operator, we relegate it to Section~\ref{sec:projective-reduction} and make permutative reduction our main evaluation mechanism, for the reason that it is more fine-grained, and thus more general. 

Permutative reduction is based on the idea that any operator distributes over the labelled choice operator (see the reduction steps in Figure~\ref{fig:reduction rules}), even other choice operators, as below.
\[
	\trm{(N +a M) +b P}	~\sim~ \trm{(N +b P) +a (M +b P)}
\]
To orient this as a rewrite rule, we need to give priority to one label over another. Fortunately, the relative position of the associated generators $\ttrm{!a}$ and $\ttrm{!b}$ provides just that. Then to define $\rw_\perm$, we will want every choice to belong to some generator, and make the order of generators explicit.

\begin{defn}
	The set $\fl{N}$ of \emph{free labels} of a term $\trm{N}$ is defined inductively by:
	\begin{align*}
		\fl{x} &= \emptyset & \fl{MN} &= \fl{M} \cup \fl{N}  & \fl{\x.M} &= \fl{M} \\
		\fl{!a.M} &= \fl{M} \smallsetminus \{a\} & \fl{M +a N} &= \fl{M}\cup \fl{N} \cup \{a\}
	\end{align*}
	A term $\trm{M}$ is \emph{label-closed} if $\fl{M} = \emptyset$.
\end{defn}

From now on, we consider only label-closed terms (and we implicitly assume this, unless otherwise stated).
All terms are identified up to renaming of their bound variables and labels.
Given some terms $\trm{M}$ and $\trm{N}$ and a variable $\trm{x}$, $\trm{M[N/x]}$ stands for the capture-avoiding (for both variables and labels) substitution of $\trm{N}$ for the free occurrences of $\trm{x}$ in $\trm{M}$.
We talk of a \emph{representative} $\trm{M}$ of a term when $\trm{M}$ is not considered up to such a renaming.
A representative $\trm{M}$ of a term is \emph{well-labeled} if
for every occurrence of $\probox a$ in $\trm{M}$ there is no $\probox a$ occurring in its scope.

\begin{defn}[Order for labels]
\label{def:orderlabels}
	Let $\trm{M}$ be a well-labeled representative of a term.
	We define an \emph{order} $<_{{\trm{M}}}$ for the labels occurring in $\trm{M}$ as follows: $a <_{\trm{M}} b$ if and only if $\probox b$ occurs in the scope of $\probox a$.
\end{defn}

\noindent
For a well-labeled and label-closed representative $\trm{M}$, $<_{\trm{M}}$ is a finite tree order.

\begin{defn}
\emph{Reduction} $\rw \,=\, \rw_\beta \cup \rw_\perm$ in $\PEL$ consists of \emph{$\beta$-reduction}~$\rw_\beta$ and \emph{permutative} or \emph{$\perm$-reduction}~$\rw_\perm$, both given in Figure~\ref{fig:reduction rules}. We write $\rws$ for the reflexive-transitive closure of reduction and $\rwn$ for reduction to normal form, and similarly for $\rw_\beta$ and $\rw_\perm$.
\end{defn}

\noindent
The introduction briefly sketches two example reductions; a third, complete reduction is given in Figure~\ref{fig:example reduction}.
The crucial feature of $\perm$-reduction is that a choice $\ttrm{+a}$ \emph{does} permute out of the argument position of an application, but a generator $\ttrm{!a}$ does \emph{not}, as below. Since the argument of a redex may be duplicated, this is how we characterize the difference between the \emph{outcome} of a probabilistic event, whose duplicates may be identified, and the event itself, whose duplicates may yield different outcomes.
\[
	\trm{N\,(M +a P)}~\rw_\perm~ \trm{(NM) +a (NP)} 
	\qquad
	\qquad
	\trm{N\,(!a.M)}~\not\rw_\perm~\trm{!a.N\,M}
\]
By inspection of the rewrite rules in Figure~\ref{fig:reduction rules}, we can then characterize the normal forms of $\rw_\perm$ and $\rw$ as follows.

\begin{prop}[Syntactic characterization of normal forms]
The normal forms $P_0$ of $\rw_\perm$, respectively $N_0$ of $\rw$, are given by the following grammars.

\[
	\begin{array}{ccc@{~}c@{~}c@{}l}
		P_0 &\coloneqq& P_1 &\mid& \trm{P_0 \+ P_0}
	\\	P_1	&\coloneqq& x   &\mid& \x.P_1 			 &~\mid~ P_1\,P_0
	\end{array}
	\qquad\qquad
	\begin{array}{ccc@{~}c@{~}c}
		N_0 &\coloneqq& N_1 &\mid& \trm{N_0 \+ N_0}
	\\	N_1	&\coloneqq& N_2 &\mid& \x.N_1
	\\	N_2 &\coloneqq& x	&\mid& N_2\,N_0
	\end{array}
\]
\end{prop}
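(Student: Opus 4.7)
The plan is to prove both characterizations in two directions: soundness (every term generated by the grammar is in normal form) and completeness (every normal form is generated by the grammar), both by structural induction. The grammar for $N_0/N_1/N_2$ differs from $P_0/P_1$ only by the restriction that blocks $\beta$-redexes (the function side of an application must be $N_2$, not an arbitrary $N_1$), so the two proofs follow the same template with just one extra check for $\beta$ in the $N_0$ case.

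For soundness, I would assume inductively that the subterms generated by the grammar are already normal and verify that no rule of Figure~\ref{fig:reduction rules} applies at the root. The grammar is designed precisely to block every redex pattern: in the case $P_1 = \trm{\x.P_1}$ the body is neither a choice nor generator-headed, blocking $\plusAbs$ and $\boxAbs$; in $P_1 = P_1\,P_0$ the function side is neither a choice, nor generator-headed, nor (in the $N_0$ case) an abstraction, blocking $\plusFun$, $\boxFun$, and $\beta$; and for the choice case $\trm{P_0 \+ P_0} = \trm{!a.(P_0 +a P_0)}$ with $a$ fresh in the two children, the freshness rules out $\boxVoid$, $\plusBox$, $\cancelL$, $\cancelR$, and (assuming the two children are distinct) $\idem$ at the root, while the side conditions of $\plusL$ and $\plusR$ fail because any internal choice $\ttrm{+b}$ inside a child is bound by an $\ttrm{!b}$ strictly below $\ttrm{!a}$, giving $a <_M b$.

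For completeness, I would induct on the normal form $\trm M$, splitting on its outermost constructor. Variables lie in $P_1$; for $\trm M = \trm{\x.N}$ the absence of $\plusAbs$ and $\boxAbs$ forces $\trm N$ to be neither a choice nor generator-headed, so $\trm N \in P_1$ by IH and $\trm M \in P_1$; for $\trm M = \trm{NL}$ the absence of $\plusFun$, $\boxFun$, and (for $N_0$) $\beta$ constrains $\trm N$ to $P_1$ (resp.\ $N_2$) while $\trm L \in P_0$ (resp.\ $N_0$) by IH. The delicate case is $\trm M = \trm{!a.N}$: label-closedness of $\trm M$ gives $\fl{N} \subseteq \{a\}$, and absence of $\boxVoid$ forces $a \in \fl{N}$. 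I would then argue by an inner induction on $\trm N$ that $\trm N$ must already have a $\ttrm{+a}$-choice at its root --- otherwise one of $\plusAbs$, $\plusFun$, $\plusArg$, $\plusBox$ would apply inside $\trm M$, pulling some $\ttrm{+a}$ upward. So $\trm N = \trm{L +a R}$; absence of $\cancelL$, $\cancelR$, $\idem$ and the ordering side conditions of $\plusL$, $\plusR$ then force $a \notin \fl{L}\cup\fl{R}$, $\trm L \neq \trm R$, and $\trm L, \trm R$ both normal, giving $\trm M = \trm{L \+ R}$ with $\trm L, \trm R \in P_0$ by the outer IH.

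The main obstacle is the inner ``bubbling'' argument in the generator case: showing that in a $\perm$-normal $\trm{!a.N}$ the choice $\ttrm{+a}$ must sit directly under $\ttrm{!a}$. This uses the combined effect of $\plusAbs$, $\plusFun$, $\plusArg$, $\plusL$, $\plusR$, and $\plusBox$ to float any occurrence of $\ttrm{+a}$ upward through abstractions, applications, other choices, and other generators, and one must check that the label-order side conditions of $\plusL$, $\plusR$, $\plusBox$ remain satisfiable along the way. The well-labeled representative and the tree-order $<_M$ of Definition~\ref{def:orderlabels} are precisely the tools that make this uniform: inside $\trm N$ every inner $\ttrm{!b}$ binds $b$ strictly below $\ttrm{!a}$, so $a <_M b$ holds whenever $\plusL$ or $\plusR$ needs to be invoked with outer label $a$.
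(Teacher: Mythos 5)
Your proposal is correct and is essentially the argument the paper leaves implicit: the paper justifies this proposition only by the phrase ``by inspection of the rewrite rules,'' and your two-directional induction --- in particular the bubbling lemma showing that in a $\perm$-normal $\trm{!a.N}$ with $a\in\fl N$ every occurrence of $\ttrm{+a}$ must sit immediately under $\ttrm{!a}$, using that $a$ is $<_M$-minimal there so that $\plusAbs$, $\plusFun$, $\plusArg$, $\plusBox$, $\plusL$, $\plusR$, $\cancelL$, $\cancelR$ jointly exclude every other parent constructor --- supplies exactly the details the paper omits. Two minor remarks: you are right to flag that the grammar over-generates on $\idem$ (e.g.\ $x \+ x$ is derivable but not $\perm$-normal), a caveat the paper silently ignores; and in the soundness direction the $\plusL$/$\plusR$ redexes are blocked not because their order side conditions fail but because the pattern never matches --- in a grammar-generated term the operand of a labelled choice is never itself a bare choice, since each $\+$ reinstates a generator above its choice --- though your order-based argument would equally dispose of the configuration if it could arise.
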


\begin{figure}[!t]
\newcommand\fit[1]{\makebox[36pt][c]{$#1$}}%
\begin{align*}
	\trm{!a.(\x.x = x)(\top+a\bot)} 
			& \fit{\rw_\perm}	\trm{!a.(\x.x=x)\top~+a~(\x.x=x)\bot}	\tag\plusArg
\\		& \fit{\rws_\beta}	\trm{!a.(\top=\top)\,+a\,(\bot=\bot)}
\\		& \fit{=}			\trm{!a. \top +a \top}
			  \fit{\rw_\perm} 	\trm{!a. \top} 
			  \fit{\rw_\perm}	\top           						 	\tag{\idem,\boxVoid}
\end{align*}
\caption{Example reduction of the $\cbv$-translation of the term in the introduction.}
\label{fig:example reduction}
\end{figure}


\section{Properties of permutative reduction}
\label{sec:p-reduction}

We will prove strong normalization (\SN) and confluence of $\rw_\perm$. For strong normalization, the obstacle is the interaction between different choice operators, which may duplicate each other, and even themselves, creating super-exponential growth.\footnote{This was inferred from only a simple simulation; we would be interested to know a rigorous complexity result.} Fortunately, Dershowitz's \emph{recursive path orders}~\cite{Dershowitz82} seem tailor-made for our situation.


We observe that the set $\PEL$ endowed with $\rw_\perm$ is a (first-order) term rewriting system over a countably infinite set of variables (denoted by $x, y, z, \dots$) and the signature $\Sigma$ given by:
\begin{itemize}
	\item the binary function symbol $\trm{+a}$, for any label $a$;
	\item the unary function symbol $\trm{!a}$, for any label $a$;
	\item the unary function symbol $\trm{\x}$, for any variable $x$;
	\item the binary function symbol $\trm{@}$, letting $@(\trm{M},\trm{N})$ stand for $MN$.
\end{itemize}

\begin{defn}
	Let $\trm{M}$ be a well-labeled representative of a label-closed term, and let $\Sigma_M$ be the set of signature symbols occurring in $\trm{M}$.
	We define $\prec_M$ as the (strict) partial order on $\Sigma_M$ generated by the following rules.
\[
\begin{array}{rcl@{\qquad\quad}l}
		\trm{+a} &\prec_M& \trm{+b} & \text{ if } a <_M b
\\	\trm{+a} &\prec_M& \trm{!b} & \text{ for any labels } a,b
\\	\trm{!b} &\prec_M& @,\x		& \text{ for any label } b
\end{array}
\]
\end{defn}

\begin{lem}
\label{lemma:strong-normalization}
	The reduction $\rw_\perm$ is strongly normalizing.
\end{lem}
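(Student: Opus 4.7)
The plan is to invoke Dershowitz's recursive path order (RPO)~\cite{Dershowitz82}: it suffices to exhibit a well-founded precedence on the signature such that every rewrite rule strictly decreases in the induced RPO with multiset status. The natural candidate is the precedence $\prec_M$ already given just before the statement: choices are below generators, generators are below abstractions and applications, and among choice symbols the scope tree of their binding generators induces the order. Since $\Sigma_M$ is finite and $\prec_M$ is a strict partial order, the induced RPO is a well-founded simplification order.

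I would then check each rule of $\rw_\perm$ in turn. The rules $(\idem)$ and $(\boxVoid)$ reduce to a proper subterm, hence follow from the subterm property of RPO. In $(\cancelL)$ and $(\cancelR)$ the root symbol $+_a$ is preserved while one argument is replaced by a proper subterm of itself, so the multiset extension strictly decreases. In $(\plusAbs)$, $(\plusFun)$, $(\plusArg)$, $(\plusBox)$, $(\boxAbs)$ and $(\boxFun)$ the LHS root is strictly greater in the precedence than the RHS root (e.g.\ $(\plusBox)$ passes from $!b$ to $+_a$, and $+_a \prec !b$ by definition); in each case one dominates every argument of the RHS by the LHS via a routine multiset argument using the subterm property. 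The rules $(\plusL)$ and $(\plusR)$ are the characteristic ones: the root changes from $+_b$ to $+_a$, and the required strict decrease $+_a \prec +_b$ is precisely what the side condition $a <_M b$ furnishes.

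The point requiring the most care, and what I expect to be the main obstacle, is that the scope order $<_M$ can shift along a reduction: $(\boxFun)$ and $(\boxAbs)$ may enlarge the scope of a generator by moving it over an application or abstraction, $(\boxVoid)$ removes a label altogether, and $(\plusBox)$ duplicates a generator into parallel positions. To turn $\prec_M$ into a single precedence usable throughout reduction, I would fix once and for all a linear extension of the union of scope orders over all reducts of the starting term; the crucial observation is that $(\boxFun)$ and $(\boxAbs)$ can only add relations compatible with the existing ancestry (they never reverse it), and $(\plusBox)$ creates parallel, non-nested copies of a generator, so the scope order can only grow monotonically. The union is therefore a strict partial order, admitting a linear extension, and the resulting fixed precedence validates every instance of $(\plusL)$ and $(\plusR)$ encountered in any reduction sequence from $\trm{M}$.

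Concluding by Dershowitz's theorem that RPO induced by a well-founded precedence is itself a well-founded simplification order, every step of $\rw_\perm$ strictly decreases in this order, and strong normalization follows.
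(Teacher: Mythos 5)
Your proposal is correct and follows essentially the same route as the paper: Dershowitz's recursive path ordering induced by the precedence $\prec_M$, with the same rule-by-rule verification (subterm property for $\idem$ and $\boxVoid$, multiset decrease for the cancellation rules, precedence decrease at the root for the permutation rules, and the side condition $a<b$ supplying $\trm{+a}\prec\trm{+b}$ for $\plusL$/$\plusR$). Your extra care about the stability of the scope order along a reduction sequence---fixing a linear extension of the (monotonically growing, never reversed) union of the orders over all reducts---is a legitimate and indeed more explicit treatment of a point the paper dispatches with the single remark that reduction can only shrink the signature.
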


\begin{proof}
For the first-order term rewriting system $(\PEL, \rw_\perm)$ we derive a well-founded recursive path ordering $<$ from $\prec_M$ following \cite[p. 289]{Dershowitz82}. Let $f$ and $g$ range over function symbols, let $[N_1,\dots,N_n]$ denote a multiset and extend $<$ to multisets by the standard multiset ordering, and let $N = f(N_1,\dots,N_n)$ and $M = g(M_1,\dots,M_m)$; then
\[
N < M \iff
\left\{
\begin{array}{ll}
	[N_1,\dots,N_n] < [M_1,\dots,M_m] & \text{ if } f = g
\\[5pt]
	[N_1,\dots,N_n] < [M]			  & \text{ if } f \prec_M g 
\\[5pt]
	[N] \leq [M_1,\dots,M_m]		  & \text{ if } f \npreceq_M g~.
\end{array}
\right.
\]
While $\prec_M$ is defined only relative to $\Sigma_M$, reduction may only reduce the signature. Inspection of Figure~\ref{fig:reduction rules} then shows that $M \rw_\perm N$ implies $N<M$.
\end{proof}
Of course, there is not to prove $\rw$ being strongly normalizing,
due to the presence of $\beta$-reduction rules and the absence of types.

\subsection{Confluence of permutative reduction}

With strong normalization, confluence of $\rw_\perm$ requires only local confluence. We begin by reducing the number of cases to consider, by casting the permutations of $\ttrm{+a}$ as instances of a common shape.

\begin{defn}
We define a \emph{context} $C[\,]$ as follows.
\[
\begin{array}{lll@{~}l@{~}l@{~}l@{~}l}
	C[\,] &\coloneqq& [\,] &\mid& \lambda x.C[\,] &\mid& C[\,]M ~\mid~ NC[\,] ~\mid~ \trm{C[\,]+aM} ~\mid~ \trm{N+aC[\,]} ~\mid~ \trm{!a.C[\,]}
\end{array}
\]
The term $C[N]$ represents $C[\,]$ with the hole $[\,]$ replaced by $N$.
\end{defn}

Observe that the six reduction rules $\plusAbs$ through $\plusBox$ in Figure~\ref{fig:reduction rules} are all of the following form. We refer to these collectively as $\plusX$.
\begin{align}
	\trm{C[N+aM]} \rw_\perm \trm{C[N]+aC[M]}
	\tag\plusX
\end{align}

\newcounter{lem:confluence-perm}
\addtocounter{lem:confluence-perm}{\value{defn}}
\begin{lem}[Confluence of $\rw_\perm$]
	\label{lem:confluence-perm}
Reduction $\rw_\perm$ is confluent.
\end{lem}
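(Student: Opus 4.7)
The plan is to invoke Newman's lemma: Lemma~\ref{lemma:strong-normalization} provides strong normalization of $\rw_\perm$, so confluence reduces to local confluence. I would therefore show that whenever $\trm{M} \rw_\perm \trm{N_1}$ and $\trm{M} \rw_\perm \trm{N_2}$, the terms $\trm{N_1}$ and $\trm{N_2}$ are joinable by $\rws_\perm$ to a common $\trm{P}$.

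The plan is then to enumerate critical pairs from the rules in Figure~\ref{fig:reduction rules}, exploiting the observation that the six rules $\plusAbs$, $\plusFun$, $\plusArg$, $\plusL$, $\plusR$, $\plusBox$ are all instances of the single schema $\plusX$, namely $\trm{C[N+aM]} \rw_\perm \trm{C[N]+aC[M]}$ for a context $C[\,]$. Disjoint redexes are trivially joinable, so only overlaps need real work. These split into: (i) two $\plusX$ steps on the same term; (ii) a $\plusX$ step overlapping one of the idempotence/cancellation rules $\idem$, $\cancelL$, $\cancelR$; (iii) a $\plusX$ step overlapping one of the box rules $\boxVoid$, $\boxAbs$, $\boxFun$; and (iv) overlaps among the box rules or among $\idem$, $\cancelL$, $\cancelR$ themselves.

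Case (i): two nested $\plusX$ redexes either permute adjacent choices $\ttrm{+a}, \ttrm{+b}$, where the tree order $a <_M b$ dictates which becomes the outermost operator on both sides, so the two four-way splits coincide; or they distribute a single $\ttrm{+a}$ through a two-level context, closing with one further $\plusX$ on each side. Case (ii) is typified by $\trm{(N+aN)+bP}$, which reduces to $\trm{N+bP}$ by $\idem$ and to $\trm{(N+bP)+a(N+bP)}$ by $\plusL$; a further $\idem$ closes the diagram, and $\cancelL$, $\cancelR$ are analogous. For case (iii), when $\boxVoid$ fires, the discarded label was unused and cannot interfere with any $\plusX$ step; $\boxAbs$ and $\boxFun$ commute with $\plusX$ by absorbing the extra $\ttrm{\x}$ or application into the context $C[\,]$. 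Case (iv) is routine: e.g.\ $\trm{\x.!a.!b.N} \rw_\perm \trm{!a.\x.!b.N} \rw_\perm \trm{!a.!b.\x.N}$ versus the symmetric path via $\boxAbs$ first on the inner generator.

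The main obstacle, I expect, is bookkeeping of the side conditions: the label order $a <_M b$ on $\plusL$ and $\plusR$, the distinctness $a \neq b$ on $\plusBox$, and the freshness $a \notin N$ on $\boxVoid$. One must check that every joining reduction respects these constraints. This amounts to the observation that no rule of $\rw_\perm$ introduces new generators, and the only rule that erases one ($\boxVoid$) removes a label that was already absent from its scope; consequently the tree order $<_M$ on free and bound labels is stable along reduction, and the critical-pair diagrams all close inside the rewrite system.
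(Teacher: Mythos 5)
Your proposal follows the paper's proof essentially verbatim: Newman's lemma together with the strong normalization of Lemma~\ref{lemma:strong-normalization} reduces confluence to local confluence, and the critical pairs are organized around the common schema $\plusX$ exactly as the paper does. One small caveat: in your case (i), the two four-way splits arising from overlapping $\plusL$ and $\plusR$ on a term such as $\trm{(N+aM)+b(P+aQ)}$ do \emph{not} coincide outright but are joinable only after further $\cancelL$/$\cancelR$ steps (both landing on $\trm{(N+bP)+a(M+bQ)}$), which is precisely the diagram the paper closes explicitly.
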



\begin{proof}
We prove local confluence; by Newman's lemma and strong normalization of $\rw_\perm$ (\Cref{lemma:strong-normalization}), confluence follows. We consider each reduction rule against those lower down in Figure~\ref{fig:reduction rules}. For the symmetric rule pairs $\cancelL$/$\cancelR$, $\plusL$/$\plusR$, and $\plusFun/\plusArg$ we present only the first case. Unless otherwise specified, we let $a<b<c$.

\newcommand\itm[2]{\medskip\noindent(#1)}

\itm\idem{\trm{N +a N}\rw N}
{\small
\[
\vc{\begin{tikzpicture}[x=20pt,y=1.5ex]
	\node[anchor=base east] (a) at (0,0) {$\trm{(N+aN)+aM}$};
	\node[anchor=base west] (b) at (1,0) {$\trm{N+aM}$};
	\draw[rw] (0,1) --node[above]{$\idem$}    (1,1);
	\draw[rw] (0,0) --node[below]{$\cancelL$} (1,0);
\end{tikzpicture}}
\]
\[
\vcenter{\hbox{\begin{tikzpicture}
	\matrix [matrix of math nodes] (m) {
	  		\trm{(N+aM)+a(N+aM)} &[20pt] \trm{N+aM}
	\\[20pt]\trm{N+a(N+aM)}
	\\ };
	\draw[rw] (m-1-1) --node[above]{$\idem$}    (m-1-2);
	\draw[rw] (m-1-1) --node[left] {$\cancelL$} (m-2-1);
	\draw[rw,implied] (m-2-1) --node[below right=-2pt] {$\cancelR$} (m-1-2);
\end{tikzpicture}}}
\qquad
\vcenter{\hbox{\begin{tikzpicture}
	\matrix [matrix of math nodes] (m) {
	  		\trm{C[N+aN]}    &[20pt] C[N]
	\\[20pt]\trm{C[N]+aC[N]}
	\\ };
	\draw[rw] (m-1-1) --node[above]{$\idem$}  (m-1-2);
	\draw[rw] (m-1-1) --node[left] {$\plusX$} (m-2-1);
	\draw[rw,implied] (m-2-1) --node[below right=-2pt] {$\idem$} (m-1-2);
\end{tikzpicture}}}
\]
\[
\vcenter{\hbox{\begin{tikzpicture}
	\matrix [matrix of math nodes] (m) {
	  		\trm{(N+aM)+b(N+aM)} &[30pt] \trm{N+aM}
	\\[20pt]\trm{(N+b(N+aM))+a(M+b(N+aM))} 
	\\[20pt]\trm{((N+bN)+a(N+bM))+a((M+bN)+a(M+bM))} & \trm{(N+bN)+a(M+bM)}
	\\ };
	\draw[rw] (m-1-1) --node[above]{$\idem$}    (m-1-2);
	\draw[rw] (m-1-1) --node[left] {$\plusL$} (m-2-1);
	\draw[rws,implied] (m-2-1) --node[left] {$\plusR$}   (m-3-1);
	\draw[rws,implied] (m-3-2) --node[right]{$\idem$}    (m-1-2);
	\draw[rws,implied] (m-3-1) --node[below]{$\cancelL,\cancelR$} (m-3-2);
\end{tikzpicture}}}
\]
}

\itm\cancelL{\trm{(N +a M) +a P}\rw\trm{N +a P}}
{\small
\[
\vcenter{\hbox{\begin{tikzpicture}
	\matrix [matrix of math nodes] (m) {
	  		\trm{(N+aM)+a(P+aQ)} &[15pt] \trm{N+a(P+aQ)}
	\\[20pt]\trm{(N+aM)+a Q} & \trm{N+aQ}
	\\ };
	\draw[rw] (m-1-1) --node[above]{$\cancelL$} (m-1-2);
	\draw[rw] (m-1-1) --node[left] {$\cancelR$} (m-2-1);
	\draw[rw,implied] (m-1-2) --node[right]{$\cancelR$} (m-2-2);
	\draw[rw,implied] (m-2-1) --node[below]{$\cancelL$} (m-2-2);
\end{tikzpicture}}}
\vcenter{\hbox{\begin{tikzpicture}
	\matrix [matrix of math nodes] (m) {
	  		\trm{C[(N+aM)+aP]} &[15pt] \trm{C[N+aM]}
	\\[20pt]\trm{C[N+aM]+aC[P]}
	\\[20pt]\trm{(C[N]+aC[M])+aC[P]} & \trm{C[N]+aC[M]}
	\\ };
	\draw[rw] (m-1-1) --node[above]{$\cancelL$} (m-1-2);
	\draw[rw] (m-1-1) --node[left] {$\plusX$}   (m-2-1);
	\draw[rw,implied] (m-2-1) --node[left] {$\plusX$}   (m-3-1);
	\draw[rw,implied] (m-1-2) --node[right]{$\plusX$}   (m-3-2);
	\draw[rw,implied] (m-3-1) --node[below]{$\cancelL$} (m-3-2);
\end{tikzpicture}}}
\]
\[
\vcenter{\hbox{\begin{tikzpicture}
	\matrix [matrix of math nodes] (m) {
			\trm{(N+bM)+b(P+aQ)} &[20pt]  \trm{N+b(P+aQ)}
	\\[20pt]\trm{((N+bM)+bP)+a((N+bM)+bQ)} & \trm{(N+bP)+a(N+bQ)}
	\\ };
	\draw[rw] (m-1-1) --node[above]{$\cancelL$} (m-1-2);
	\draw[rw] (m-1-1) --node[left] {$\plusR$}   (m-2-1);
	\draw[rw, implied] (m-1-2) --node[right]{$\plusR$} (m-2-2);
	\draw[rws,implied] (m-2-1) --node[below]{$\cancelL$} (m-2-2);
\end{tikzpicture}}}
\]
}

\itm\plusX{\trm{C[N+aM]}\rw\trm{C[N]+a C[M]}}
{\small
\[
\vcenter{\hbox{\begin{tikzpicture}
	\matrix [matrix of math nodes] (m) {
	  		\trm{C[(N+aM)+bP]}      &[20pt] \trm{C[N+aM]+bC[P]}
	\\[20pt]\trm{C[(N+bP)+a(M+bP)]} &       \trm{(C[N]+aC[M])+bC[P]}		 
	\\[20pt]\trm{C[N+bP]+aC[M+bP]}  &       \trm{(C[N]+bC[P])+a(C[M]+bC[P])}
	\\ };
	\draw[rw] (m-1-1) --node[above]{$\plusX$} (m-1-2);
	\draw[rw] (m-1-1) --node[left] {$\plusL$} (m-2-1);
	\draw[rw, implied] (m-1-2) --node[right]{$\plusX$} (m-2-2);
	\draw[rw, implied] (m-2-1) --node[left] {$\plusX$} (m-3-1);
	\draw[rw, implied] (m-2-2) --node[right]{$\plusL$} (m-3-2);
	\draw[rws,implied] (m-3-1) --node[below]{$\plusX$} (m-3-2);
\end{tikzpicture}}}
\]

\itm\plusL{\trm{(N+aM)+bP}\rw\trm{(N+bP)+a(M+bP)}}
\[
\vcenter{\hbox{\begin{tikzpicture}[x=340pt,y=40pt]
	\node[anchor=west] (a) at (0,2) {$\trm{(N+aM)+b(P+aQ)}$};
	\node[anchor=west] (b) at (0,1) {$\trm{((N+aM)+bP)+a((N+aM)+bQ)}$};
	\node[anchor=west] (c) at (0,0) {$\trm{((N+bP)+a(M+bP))+a((N+bQ)+a(M+bQ))}$};
	\node[anchor=east] (d) at (1,2) {$\trm{(N+b(P+aQ))+a(M+b(P+aQ))}$};
	\node[anchor=east] (e) at (1,1) {$\trm{((N+bP)+a(N+bQ))+a((M+bP)+a(M+bQ))}$};
	\node[anchor=east] (f) at (1,0) {$\trm{(N+bP)+a(M+bQ)}$};
	\draw[rw] (a) --node[above]{$\plusL$} (d);
	\draw[rw] 			($(a.south west)+( 43.5pt,0pt)$) --node[left] {$\plusR$} ($(b.north west)+( 43.5pt,0pt)$);
	\draw[rws,implied]	($(d.south east)+(-43.5pt,0pt)$) --node[right]{$\plusR$} ($(e.north east)+(-43.5pt,0pt)$);
	\draw[rws,implied]	($(b.south west)+( 43.5pt,0pt)$) --node[left] {$\plusL$} ($(c.north west)+( 43.5pt,0pt)$);
	\draw[rws,implied]	($(e.south east)+(-43.5pt,0pt)$) --node[right]{$\cancelL,\cancelR$} ($(f.north east)+(-43.5pt,0pt)$);
	\draw[rws,implied] (c) --node[below]{$\cancelL,\cancelR$} (f);
\end{tikzpicture}}}
\]
\[
\vcenter{\hbox{\begin{tikzpicture}
	\matrix [matrix of math nodes] (m) {
	  		\trm{(N+bM)+c(P+aQ)}            &[20pt] \trm{(N+c(P+aQ))+b(M+c(P+aQ))}
	\\[20pt]                                &       \trm{((N+cP)+a(N+cQ))+b((M+cP)+a(M+cQ))} 
	\\[20pt]\trm{((N+bM)+cP)+a((N+bM)+cQ)}  &       \trm{((N+cP)+b(M+cP))+a((N+cQ)+b(M+cQ))}
	\\ };
	\draw[rw] (m-1-1) --node[above]{$\plusL$} (m-1-2);
	\draw[rw] (m-1-1) --node[left] {$\plusR$} (m-3-1);
	\draw[rws,implied] (m-1-2) --node[right]{$\plusR$} (m-2-2);
	\draw[rws,implied] (m-2-2) --node[right]{$\plusL,\plusR,\cancelL,\cancelR$} (m-3-2);
	\draw[rws,implied] (m-3-1) --node[below]{$\plusL$} (m-3-2);
\end{tikzpicture}}}
\]
}

The remaining cases are considered in the proof in the Appendix (p.~\pageref{lemmaAppendix:confluence-perm}).
\qedhere

\end{proof}

\begin{defn}
We denote the unique $\perm$-normal form of a term $N$ by $N^\perm$.
\end{defn}


\section{Confluence}
\label{sec:confluence}

We aim to prove that $\rw \,=\, \rw_\beta \cup \rw_\perm$ is confluent. We will use the standard technique of \emph{parallel} $\beta$-reduction, a simultaneous reduction step on an arbitrary number of $\beta$-redexes in the source term, which we define via a labeling of the redexes to be reduced. The central point is to find a notion of reduction that is \emph{diamond}, \ie\ every critical pair can be closed in one (or zero) steps. This will be our \emph{complete} reduction, which consists of parallel $\beta$-reduction followed by $\perm$-reduction to normal form.

\begin{defn}
A \emph{labeled} term $\trm{N*}$ is a term $N$ with chosen $\beta$-redexes annotated as $\trm{(\x.M)*P}$. The \emph{labeled reduct} $\trm{<N*>}$ of a labeled term is defined by induction on $N$ as follows:
\begin{align*}
	\trm{<(\x.N*)*M*>} &= \trm{<N*>[<M*>/x]}	&	\trm{<N*M*>} 	&= \trm{<N*><M*>}
\\	\trm{<x>}		&= x						&	\trm{<N*+aM*>}	&= \trm{<N*>+a<M*>}
\\	\trm{<\x.N*>}	&= \trm{\x.<N*>}			&	\trm{<\,!a.N*>}	&= \trm{!a.<N*>}
\end{align*}
A \emph{parallel $\beta$-step} is a reduction $N\rwp_\beta\trm{<N*>}$ for some labelling $\trm{N*}$ of $N$.
\end{defn}

\noindent
We write $\trm{N*}\rwp_\beta\trm{<N*>}$ for the specific parallel step indicated by the labeling $\trm{N*}$. Observe that $\trm{<N*>}$ is a regular unlabeled term, since all labels are removed in the reduction. For the empty labelling, $\trm{N} = \trm{<N*>}$, so that parallel reduction is reflexive: $\trm{N} \rwp_\beta \trm{N}$.

\begin{lem}
A parallel $\beta$-step $N\rwp_\beta M$ is a $\beta$-reduction $N\rws_\beta M$.
\end{lem}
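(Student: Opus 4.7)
The plan is to proceed by structural induction on the labeled term $\trm{N*}$, showing in each case that $N \rws_\beta \trm{<N*>}$. The key auxiliary fact is that $\rws_\beta$ is a congruence, i.e., closed under each term constructor $\trm{\x.(-)}$, $\trm{(-)M}$, $\trm{N(-)}$, $\trm{(-)+aM}$, $\trm{N+a(-)}$, and $\trm{!a.(-)}$; this is immediate from the standard definition of $\rw_\beta$ as the closure of the $\beta$-rule under arbitrary contexts.

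In the base case $\trm{N*} = x$, we have $\trm{<x>} = x$ and the claim holds by reflexivity of $\rws_\beta$. In each structural case where the outermost constructor is not a labeled redex --- namely $\trm{\x.N*}$, unlabeled application $\trm{N*M*}$, choice $\trm{N*+aM*}$, and generator $\trm{!a.N*}$ --- the induction hypothesis on each labeled subterm yields $N \rws_\beta \trm{<N*>}$ (and similarly for $M$ where applicable), and congruence of $\rws_\beta$ directly delivers the claim: for example, from $N \rws_\beta \trm{<N*>}$ and $M \rws_\beta \trm{<M*>}$ we get $\trm{N+aM} \rws_\beta \trm{<N*>+a<M*>} = \trm{<N*+aM*>}$.

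The only nontrivial case is the labeled redex $\trm{(\x.N*)*M*}$, whose reduct is $\trm{<N*>[<M*>/x]}$. By the induction hypothesis, $N \rws_\beta \trm{<N*>}$ and $M \rws_\beta \trm{<M*>}$; by congruence, this lifts to $\trm{(\x.N)M} \rws_\beta \trm{(\x.<N*>)<M*>}$, and one further $\beta$-step contracts the now-unlabeled outermost redex to $\trm{<N*>[<M*>/x]}$, as required.

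There is essentially no obstacle: the statement amounts to unfolding a simultaneous contraction of marked redexes into a sequence of ordinary $\beta$-steps performed one at a time (innermost first via the induction). The only minor point to keep in mind is that capture-avoiding substitution commutes appropriately with the congruence closure, which is the standard setup already implicit in the definition of $\beta$-reduction on $\PEL$-terms identified up to $\alpha$-equivalence.
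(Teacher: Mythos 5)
Your proof is correct and follows exactly the paper's approach: the paper proves this by induction on the labelled term $\trm{N*}$, which is precisely the structural induction you carry out, with the labelled-redex case handled by congruence followed by one contracting $\beta$-step. Your write-up simply makes explicit the cases the paper leaves implicit.
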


\begin{proof}
By induction on the labelled term $\trm{N*}$ generating $N\rwp_\beta\trm{<N*>}=M$.
\end{proof}

For the commutation of (parallel) $\beta$-reduction with $\perm$-reduction, we run into the minor issue that a permuting generator or choice operator may block a redex: in both cases below, on the left the term has a redex, but on the right it is blocked. We address by an adaptation of $\perm$-reduction $\rwb_\perm$ on labelled terms, which is a strategy in $\rws_\perm$ that permutes past a labelled redex in one step.
\[
\begin{array}{rcl}
	\trm{(\x.N+aM)\,P} & \rw_\perm & \trm{((\x.N)+a(\x.M))\,P}
\\[5pt]
	\trm{(\x.!a.N)\,M} & \rw_\perm & \trm{(!a.\x.N)\,M}
\end{array}
\]

\begin{defn}
A \emph{labelled} $\perm$-reduction $\trm{N*}\rwb_\perm\trm{M*}$ is a $\perm$-reduction of one of the forms
\[
\begin{array}{rcl}
	\trm{(\x.N*+aM*)*P*} &\rws_\perm& \trm{(\x.N*)*P*+a(\x.M*)*P*}
\\[5pt]
	\trm{(\x.!a.N*)*M*} &\rws_\perm& \trm{!a.(\x.N*)*M*}
\end{array}
\]
or a single $\perm$-step $\rw_\perm$ on unlabeled constructors in $\trm{N*}$.
\end{defn}

\begin{lem}
\label{lem:parallel p-reduction}
Reduction to normal form in $\rwb_\perm$ is equal to $\rwn_\perm$.
\end{lem}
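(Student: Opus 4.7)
The plan is to establish the equality via three steps: show $\rwb_\perm$ is contained in $\rws_\perm$, that it is still strongly normalizing, and that its normal forms have $\rw_\perm$-normal underlying terms; combined with confluence of $\rw_\perm$, this forces the two normal-form reductions to agree.

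First, by direct inspection, every $\rwb_\perm$-step expands into a non-empty $\rws_\perm$-reduction: the single-step case is immediate, and the two compound clauses decompose as $\plusAbs$ followed by $\plusFun$ and $\boxAbs$ followed by $\boxFun$, respectively. Hence by strong normalization of $\rw_\perm$ (\Cref{lemma:strong-normalization}), $\rwb_\perm$ is strongly normalizing too, so $\rwbn_\perm$ is well-defined on every term.

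Second, and the crux of the argument, I would show that every $\rwb_\perm$-normal form has a $\rw_\perm$-normal underlying term, by a case analysis on a hypothetical $\rw_\perm$-redex. The rules acting purely on $+a$ or $!a$ constructors (\idem, \cancelL, \cancelR, \plusL, \plusR, \boxVoid, \plusFun, \boxFun) cannot interact with the labelling of $\beta$-redexes and so fall under the single-step clause. The $\plusArg$ rule causes no issue either: if its outer application happens to be a labelled $\beta$-redex, the rewrite merely duplicates that redex with both copies retaining their label, which the single-step clause still admits. The only potentially problematic cases are $\plusAbs$ and $\boxAbs$ applied to the binder of a labelled $\beta$-redex, because here a regular $\rw_\perm$-step would separate $\trm{\x}$ from its application and destroy the labelling — and these are precisely the two shapes captured by the compound clauses.

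Finally, by confluence of $\rw_\perm$ (\Cref{lem:confluence-perm}) the $\rw_\perm$-normal form of any term is unique; since every $\rwb_\perm$-normal form is both reachable from its source by $\rws_\perm$ and is itself $\rw_\perm$-normal, it must coincide with that unique normal form, yielding $\rwbn_\perm = \rwn_\perm$ as relations. The main obstacle is the case analysis in the second step, specifically pinning down what counts as ``on unlabelled constructors'' so that $\plusArg$ duplicating a labelled redex is admissible while $\plusAbs$ and $\boxAbs$ applied to a labelled binder are excluded — the latter being exactly what the compound clauses are designed to repair.
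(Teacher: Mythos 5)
Your proof is correct, and it reaches the conclusion by a somewhat different route than the paper. Both arguments hinge on the same key observation --- that $\rwb_\perm$ and $\rw_\perm$ have the same normal forms --- which the paper merely asserts with ``Observe that\dots'' and which you actually justify via the case analysis on which $\rw_\perm$-rules can interact with a labelled $\beta$-redex (only $\plusAbs$ and $\boxAbs$ applied to the abstraction of a labelled application destroy the redex shape, and these are exactly the two compound clauses). Where you genuinely diverge is in the inclusion $\rwn_\perm\subseteq\rwbn_\perm$: the paper proves it by locating the first step $P\rw_\perm Q$ of a given normalizing reduction that is not a $\rwb_\perm$-step, replacing it by some $P\rwb_\perm R$ with $Q\rw_\perm R$, and inducting on the lengths of reduction paths from $R$; you instead note that $\rwb_\perm$ is itself strongly normalizing (each step being a non-empty $\rws_\perm$-reduction), so every term has \emph{some} $\rwb_\perm$-normal form, which by your step two is $\rw_\perm$-normal and hence, by confluence, is the unique $\perm$-normal form. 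Your version is more abstract --- it works for any sub-relation of a confluent, strongly normalizing relation whose steps are non-empty and whose normal forms coincide --- and it spares you verifying the local replacement property the paper relies on; the paper's version is more constructive, turning a given $\rw_\perm$-reduction into a $\rwb_\perm$-one. Finally, you are right to flag the reading of ``a single $\perm$-step on unlabeled constructors'': the paper's own \Cref{lem:parallel p - parallel beta} treats $\plusArg$ applied to a labelled redex as a legitimate $\rwb_\perm$-step that duplicates the label onto both copies, which confirms the interpretation your case analysis needs.
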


\begin{proof}
Observe that $\rw_\perm$ and $\rwb_\perm$ have the same normal forms. Then since $\rwb_\perm\,\subseteq\,\rws_\perm$ we have $\rwbn_\perm\,\subseteq\,\rwn_\perm$. Conversely, let $N\rwn_\perm M$. On this reduction, let $P\rw_\perm Q$ be the first step such that $P\not\rwb_\perm Q$. Then there is an $R$ such that $P\rwb_\perm R$ and $Q\rw_\perm R$. Note that we have $N\rwbs_\perm R$. By confluence, $R\rwn_\perm M$, and by induction on the sum length of paths in $\rw_\perm$ from $R$ (smaller than from $N$) we have $R\rwbn_\perm M$, and hence $N\rwbn_\perm M$.
\end{proof}

The following lemmata then give the required commutation properties of the relations $\rwb_\perm$, $\rwn_\perm$, and $\rwp_\beta$. Figure~\ref{fig:confluence diagrams} illustrates these by commuting diagrams.

\begin{lem}
\label{lem:parallel p - parallel beta}
If $\trm{N*}\rwb_\perm\trm{M*}$ then $\trm{<N*>}=_\perm\trm{<M*>}$.
\end{lem}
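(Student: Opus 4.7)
The plan is to case-split on the three forms of a $\rwb_\perm$-step from its definition: two ``macro'' steps that permute past a labelled $\beta$-redex, and a single standard $\perm$-step on unlabelled top constructors.

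For the two macro cases I would verify that $\trm{<N*>} = \trm{<M*>}$ holds as syntactic equality (strictly stronger than $=_\perm$), by direct computation. In the first, both $\trm{<(\x.N*+aM*)*P*>}$ and $\trm{<(\x.N*)*P*+a(\x.M*)*P*>}$ unfold to $\trm{<N*>[<P*>/x] +a <M*>[<P*>/x]}$, using the defining equations of $\trm{<-*>}$ and distributing the outer substitution over $\ttrm{+a}$. In the second, both $\trm{<(\x.!a.N*)*M*>}$ and $\trm{<!a.(\x.N*)*M*>}$ unfold to $\trm{!a.<N*>[<M*>/x]}$, using the convention that the bound label $a$ is fresh for $\trm{<M*>}$.

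For the third case, a single $\perm$-step on unlabelled top constructors, I would proceed by induction on the labelled context $C*[\,]$ around the redex, so that $\trm{N*} = C*[L*]$ and $\trm{M*} = C*[R*]$ with $L \rw_\perm R$ instantiating one of the twelve rules in Figure~\ref{fig:reduction rules}. Most shapes of $C*[\,]$ commute with $\trm{<-*>}$ trivially and the IH transfers via the closure of $=_\perm$ under term constructors. The interesting shapes are the labelled $\beta$-redex ones $\trm{(\x.C'*[\,])*P*}$ and $\trm{(\x.N'*)*C'*[\,]}$, which produce a substitution in the labelled reduct; but $=_\perm$ is preserved under substitution, since $\rw_\perm$ is closed under contexts and each occurrence of the substituted subterm can be reduced independently. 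The base case $C*[\,] = [\,]$ requires a check of each of the twelve $\perm$-rules: each is immediate because $\trm{<-*>}$ distributes over the unlabelled top constructors of the pattern, so $\trm{<N*>}$ and $\trm{<M*>}$ are related by exactly one instance of the same rule on the reducts of the common subterms.

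The main obstacle is the bookkeeping in the macro cases: verifying that the substitutions implicit in each computation line up and that the freshness conventions are respected. The duplicating $\perm$-rules (such as $\plusFun$ or $\plusL$) cause no extra trouble, since a subterm duplicated by the rule has a single labelled reduct that appears identically in every copy on the RHS.
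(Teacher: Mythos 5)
Your overall architecture matches the paper's (induction on the $\rwb_\perm$ step, with the two macro forms yielding syntactically equal labelled reducts and the context cases discharged by closure of $=_\perm$ under constructors and substitution), but your base case contains a genuine gap. The claim that each of the twelve rules is ``immediate because $\trm{<N*>}$ and $\trm{<M*>}$ are related by exactly one instance of the same rule'' is false for $\plusArg$ when the function position holds a labelled redex. For the step $\trm{(\x.Q*)*(N*+aP*)} \rw_\perm \trm{((\x.Q*)*N*)+a((\x.Q*)*P*)}$ the labelled reducts are
\[
\trm{<Q*>[(<N*>+a<P*>)/x]}
\qquad\text{versus}\qquad
\trm{<Q*>[<N*>/x]+a<Q*>[<P*>/x]},
\]
and the labelled-reduct map does not distribute here, because the topmost application is itself labelled and is consumed by the reduct. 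Relating the two sides requires permuting $\ttrm{+a}$ out of every occurrence of $x$ in $\trm{<Q*>}$ and then collapsing the resulting duplicated sums with $\idem$, $\cancelL$, $\cancelR$; and when $x\notin\fv{Q}$ it requires an $\idem$ step read \emph{backwards}, from $\trm{<Q*>}$ to $\trm{<Q*>+a<Q*>}$. This is exactly the case the paper's proof singles out, and it is the sole reason the lemma is stated with the convertibility $=_\perm$ rather than $\rws_\perm$ or syntactic equality: your argument as written never needs a backward step, which is a sign the crucial case has been missed. (The other permutation rules are unaffected: a labelled application must have an abstraction in function position, so $\plusFun$ and $\boxFun$ cannot overlap with a redex label, and the remaining rules genuinely commute with the reduct map.) Adding this one case, argued as above, completes your proof.
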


\begin{proof}
By induction on the rewrite step $\rwb_\perm$. The two interesting cases are:
\[
\begin{array}{cl}
\vcenter{\hbox{\begin{tikzpicture}
	\matrix[matrix of math nodes] (m) {
	  		\trm{(\x.M*)*(N* +a P*)} &[20pt] \trm{((\x.M*)*N*) +a ((\x. M*)*P*)}
	\\[20pt]\trm{<M*>[(<N*> +a <P*>)/x]} & \trm{<M*>[<N*>/x] +a <M*>[<P*>/x]}
	\\ };
	\draw[rwb] (m-1-1) --node[above]{$\perm$} (m-1-2);
	\draw[rwp] (m-1-1) --node[left] {$\beta$} (m-2-1);
	\draw[rwp,implied] (m-1-2) --node[right]{$\beta$} (m-2-2);
	\draw[rws,implied] (m-2-1) --node[below]{$\perm$} (m-2-2);
\end{tikzpicture}}}
&	(x\in\fv M)
\\ \\
\vcenter{\hbox{\begin{tikzpicture}
	\matrix[matrix of math nodes] (m) {
	  		\trm{(\x.M*)*(N* +a P*)} &[20pt] \trm{((\x.M*)*N*) +a ((\x. M*)*P*)}
	\\[20pt]\trm{<M*>}\vphantom{\trm{+a}} & \trm{<M*> +a <M*>}
	\\ };
	\draw[rwb] (m-1-1) --node[above]{$\perm$} (m-1-2);
	\draw[rwp] (m-1-1) --node[left] {$\beta$} (m-2-1);
	\draw[rwp,implied] (m-1-2) --node[right]{$\beta$} (m-2-2);
	\draw[rw, implied] (m-2-2) --node[below]{$\perm$} (m-2-1);
\end{tikzpicture}}}
&	(x\notin\fv M)
\end{array}
\qedhere
\]
\end{proof}

\begin{lem}
\label{lem:exhaustive p - parallel beta}
If $\trm{N*}\rwn_\perm\trm{M*}$ then $\trm{<N*>}=_\perm\trm{<M*>}$
\end{lem}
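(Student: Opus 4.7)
The plan is to obtain the lemma as an iterated application of the previous Lemma \ref{lem:parallel p - parallel beta}, along the sequence of labelled $\perm$-steps provided by Lemma \ref{lem:parallel p-reduction}.

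First, I would invoke Lemma \ref{lem:parallel p-reduction} to replace the hypothesis $\trm{N*} \rwn_\perm \trm{M*}$ by $\trm{N*} \rwbn_\perm \trm{M*}$. Since $\rwbn_\perm$ is by definition the transitive closure of $\rwb_\perm$ terminating at a normal form, this gives us a finite sequence
\[
	\trm{N*} \;=\; P_0 \rwb_\perm P_1 \rwb_\perm \cdots \rwb_\perm P_n \;=\; \trm{M*}
\]
of single labelled $\perm$-steps.

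Then I would proceed by a straightforward induction on the length $n$. The base case $n=0$ is trivial since $\trm{<N*>} = \trm{<M*>}$, hence $\trm{<N*>} =_\perm \trm{<M*>}$ by reflexivity. For the inductive step, Lemma \ref{lem:parallel p - parallel beta} applied to $P_0 \rwb_\perm P_1$ yields $\trm{<P_0>} =_\perm \trm{<P_1>}$, while the induction hypothesis applied to the shorter sequence from $P_1$ to $P_n$ yields $\trm{<P_1>} =_\perm \trm{<P_n>}$. Transitivity of the convertibility relation $=_\perm$ (the symmetric-reflexive-transitive closure of $\rw_\perm$) closes the argument.

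There is no real obstacle here; the lemma is essentially a bookkeeping consequence of the two preceding results. The only point worth verifying is that Lemma \ref{lem:parallel p-reduction} gives equality of the two reduction-to-normal-form relations (not merely inclusion in one direction), so that the endpoint $\trm{M*}$ of the original $\rwn_\perm$ reduction is literally the endpoint of a $\rwbn_\perm$ reduction, and no quotienting or case-splitting on whether $\trm{M*}$ happens to be reachable via $\rwb_\perm$ is needed.
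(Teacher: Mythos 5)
Your proof is correct and follows exactly the paper's own argument: decompose the $\rwn_\perm$ reduction into a chain of $\rwb_\perm$ steps via Lemma~\ref{lem:parallel p-reduction}, apply Lemma~\ref{lem:parallel p - parallel beta} to each step, and conclude by transitivity of $=_\perm$. The additional remark about needing the full equality $\rwbn_\perm = \rwn_\perm$ (not just an inclusion) is a valid and worthwhile observation, and it is precisely what Lemma~\ref{lem:parallel p-reduction} supplies.
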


\begin{proof}
Using \Cref{lem:parallel p-reduction} we decompose $\trm{N*}\rwn_\perm\trm{M*}$ as
\[
	\trm{N*}=\trm{N_1*}\rwb_\perm\trm{N_2*}\rwb_\perm \dots \rwb_\perm \trm{N_n*}=\trm{M*}
\]
for which \Cref{lem:parallel p - parallel beta} gives $\trm{<N_i*>}=_\perm\trm{<N_{i+1}*>}$.
\end{proof}

\begin{defn}
A \emph{complete} reduction step $\rwp$ is a parallel $\beta$-step followed by $\perm$-reduction to normal form:
\[
	N\rwp M^\perm \quad\coloneq\quad N\rwp_\beta M\rwn_\perm M^\perm~.
\]
\end{defn}

\begin{lem}[Mapping on $\perm$-normal forms]
\label{lem:p to complete reduction}
Any reduction step $N\rw M$ maps onto a complete step $N^\perm\rwp M^\perm$ on $\perm$-normal forms.
\end{lem}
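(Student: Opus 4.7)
The plan is to perform case analysis on the reduction step $N \rw M$, treating the $\perm$-case by confluence of $\rw_\perm$ and the $\beta$-case by tracking the residuals of the contracted redex through $\perm$-normalisation using labelling. The key idea is that when the $\beta$-redex in $N$ is permuted up to its $\perm$-normal form $N^\perm$, its residuals (possibly multiple copies, due to the $\plusX$ rules duplicating the redex, or moved outward by $\boxFun$/$\boxAbs$) all remain labelled $\beta$-redexes, and a single parallel $\beta$-step contracts all of them at once to produce a term $\perm$-equivalent to $M$.

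For $N \rw_\perm M$: By \Cref{lem:confluence-perm} we have $N^\perm = M^\perm$. The empty labelling yields a reflexive parallel $\beta$-step $N^\perm \rwp_\beta N^\perm$, and since $N^\perm$ is already $\perm$-normal, this is a complete step $N^\perm \rwp N^\perm = M^\perm$.

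For $N \rw_\beta M$: Label the unique contracted $\beta$-redex in $N$ to form a labelled term $\trm{N*}$ with $\trm{<N*>} = M$. Reduce $\trm{N*}$ to $\rwb_\perm$-normal form, obtaining a labelled term $\trm{L*}$ whose underlying unlabelled term is $N^\perm$ (appealing to \Cref{lem:parallel p-reduction}, since the $\rwb_\perm$-normal forms coincide with the $\rw_\perm$-normal forms). Now apply the parallel $\beta$-step $\trm{L*} \rwp_\beta \trm{<L*>}$ and $\perm$-normalise the result; this yields a complete step $N^\perm \rwp \trm{<L*>}^\perm$. By \Cref{lem:exhaustive p - parallel beta}, $M = \trm{<N*>} =_\perm \trm{<L*>}$, so by confluence of $\rw_\perm$ we have $M^\perm = \trm{<L*>}^\perm$, as required.

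The main technical point is that $\rwb_\perm$-normalisation of a labelled term yields a labelling of $N^\perm$ in which every residual of the original $\beta$-redex is recorded by a label. The two composite labelled rules of $\rwb_\perm$ are designed for precisely the two situations in which an unlabelled $\perm$-step underneath a labelled application would otherwise break the labelled redex pattern: a choice $\trm{+a}$ or a generator $\trm{!a}$ appearing immediately under the $\x$ of a labelled application. With these combined rules, labelled $\perm$-normalisation terminates at a well-labelled representative of $N^\perm$, so that all residuals can be contracted in a single parallel $\beta$-step and \Cref{lem:exhaustive p - parallel beta} can be applied.
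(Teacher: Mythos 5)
Your proof is correct and follows essentially the same route as the paper's: the $\perm$-case is handled by uniqueness of $\perm$-normal forms together with reflexivity of $\rwp_\beta$, and the $\beta$-case labels the contracted redex, $\perm$-normalises the labelled term (your $\trm{L*}$ is the paper's $(\trm{N*})^\perm$, via Lemma~\ref{lem:parallel p-reduction}), applies Lemma~\ref{lem:exhaustive p - parallel beta} to get $M=_\perm\trm{<L*>}$, and closes with confluence and strong normalisation of $\rw_\perm$. Your final paragraph just makes explicit the role of the composite $\rwb_\perm$ rules, which the paper delegates to the surrounding lemmas; no gap.
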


\begin{proof}
The case of a $\perm$-step $N\rw_\perm M$ is immediate, since $N^\perm=M^\perm$ and $\rwp_\beta$ is reflexive. 
For a $\beta$-step $N\rw_\beta M$ we label the reduced redex in $N$ to get $\trm{N*}\rwp_\beta\trm{<N*>}=M$. 
Let $\trm{P*}=(\trm{N*})^\perm$ be the (labeled) $\perm$-normal form of $\trm{N*}$. 
Then \Cref{lem:exhaustive p - parallel beta} gives $M=\trm{<N*>}=_\perm\trm{<P*>}$, illustrated below.
\[
\begin{tikzpicture}
	\matrix (m) [matrix of math nodes] {
	  \trm{N*} && \trm{P*} \\[20pt] \\ \trm{M=<N*>} & =_\perm & \trm{<P*>} \\
	};
	\draw[rwn] (m-1-1) --node[above]{$\scriptstyle\perm$} (m-1-3);
	\draw[rwp] (m-1-1) --node[left] {$\scriptstyle\beta$} (m-3-1);
	\draw[rwp] (m-1-3) --node[right]{$\scriptstyle\beta$} (m-3-3);
\end{tikzpicture}
\]
By confluence and strong normalization of $\rw_\perm$ then $\trm{<P*>}^\perm=M^\perm$. The complete reduction $N^\perm\rwp M^\perm$ is then given by $N^\perm\rwp_\beta\trm{<P*>}\rwn_\perm M^\perm$.
\end{proof}


Following \cite{Takahashi95}, the idea to prove that complete reduction $\rwp$ is diamond is: first, to prove that if $M \rwp_\beta N$ then $N \rwp_\beta M^*$, where $M^*$ is a term obtained from $M$ (and independent of $N$) by reducing all the $\beta$-redexes existing in $M$ simultaneously (\Cref{lemma:full-development}.\ref{lemma:full-development-beta} below); second, via \Cref{lem:p to complete reduction}, to lift this property to complete reduction $\rwp$ by computing the $\perm$-normal form of $M^*$ (\Cref{lemma:full-development}.\ref{lemma:full-development-complete} below).
To prove \Cref{lemma:full-development}.\ref{lemma:full-development-beta} we need the two following technical lemmata.

\newcounter{lemma:application-parallel-beta}
\addtocounter{lemma:application-parallel-beta}{\value{defn}}
\begin{lem}
\label{lemma:application-parallel-beta}
	If $\trm{M} \rwp_\beta \trm{M'}$ and $\trm{N} \rwp_\beta \trm{N'}$, 
\marginpar{\footnotesize Proof in the Appendix}
	then $\trm{MN} \rwp_\beta \trm{M'N'}$.
	If moreover $M = \lambda x.R$ and $M'^ = \lambda x R'$ with $R \rwp_\beta R'$, then  $\trm{MN} \rwp_\beta \trm{R'[N'/x]}$.
\end{lem}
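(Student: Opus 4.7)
The plan is to exploit the definition of parallel $\beta$-reduction directly: $N \rwp_\beta N'$ precisely means there is a labelling $\trm{N*}$ of $N$ whose labelled reduct $\trm{<N*>}$ is $N'$. So both statements reduce to exhibiting suitable labellings of $MN$ and computing their reducts, using the clauses of the definition of $\trm{<\cdot>}$.

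For the first claim, from the hypotheses we obtain labellings $\trm{M*}$ and $\trm{N*}$ with $\trm{<M*>} = \trm{M'}$ and $\trm{<N*>} = \trm{N'}$. Form the labelling $\trm{M* N*}$ of $MN$, i.e.\ the application of the two labelled terms with no label on the outer application (which need not be a redex at all). The clause $\trm{<N*M*>} = \trm{<N*><M*>}$ of the definition yields
\[
  \trm{<M* N*>} \;=\; \trm{<M*>\,<N*>} \;=\; \trm{M' N'},
\]
so $\trm{MN} \rwp_\beta \trm{M' N'}$ as required.

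For the second claim, we are given $\trm{R} \rwp_\beta \trm{R'}$ and $\trm{N} \rwp_\beta \trm{N'}$, hence labellings $\trm{R*}$ and $\trm{N*}$ with $\trm{<R*>} = \trm{R'}$ and $\trm{<N*>} = \trm{N'}$. Because $\trm{M} = \trm{\x.R}$, the term $\trm{MN}$ has a $\beta$-redex at the root, so we may choose the labelling $\trm{(\x.R*)*N*}$ of $MN$, which additionally marks this outer redex. Applying first the clause $\trm{<(\x.N*)*M*>} = \trm{<N*>[<M*>/x]}$ and then the clause for abstractions gives
\[
  \trm{<(\x.R*)*N*>} \;=\; \trm{<\x.R*>[<N*>/x]} \;=\; \trm{R'[N'/x]},
\]
so $\trm{MN} \rwp_\beta \trm{R'[N'/x]}$.

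There is no real obstacle here: the entire argument is the observation that labellings of subterms can be composed into a labelling of the compound term, with the option of adding one extra label at the head application when it is a redex. The only point that requires minor care is that the substitution in the $\beta$-clause is capture-avoiding for both variables and labels, which is ensured by our standing convention on $\alpha$-equivalence.
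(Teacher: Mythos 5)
Your proof is correct and takes essentially the same route as the paper's: compose the given labellings of the subterms into a labelling of $MN$, adding a label on the head application in the ``moreover'' case, and read off the labelled reduct from the defining clauses. (The only blemish is the intermediate expression $\trm{<\x.R*>[<N*>/x]}$ in your final display, which should be $\trm{<R*>[<N*>/x]}$ as the redex clause directly prescribes --- no separate appeal to the abstraction clause is needed --- but your endpoints and conclusion are right.)
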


%

\begin{lem}[Substitution]
\label{lemma:substitution}
	If $\trm{M} \rwp_\beta \trm{M'}$ and $\trm{N} \rwp_\beta \trm{N'}$, then $\trm{M<N/x>} \rwp_\beta \trm{M'<N'/x>}$.
\end{lem}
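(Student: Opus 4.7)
The plan is to proceed by induction on the labelled term $\trm{M*}$ witnessing the step $\trm{M} \rwp_\beta \trm{M'}$, performing a case analysis on the outermost constructor of $\trm{M}$ (and, when $\trm{M}$ is an application, on whether that application itself carries a label). By $\alpha$-renaming we may assume throughout that every bound variable $\trm{y}$ occurring in $\trm{M}$ is distinct from $\trm{x}$ and occurs free in neither $\trm{N}$ nor $\trm{N'}$, so that substitutions under binders behave as expected.

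The variable cases are immediate: if $\trm{M} = \trm{x}$, then $\trm{M'} = \trm{x}$ and the claim reduces to the hypothesis $\trm{N}\rwp_\beta\trm{N'}$; if $\trm{M}$ is a variable other than $\trm{x}$, both sides are equal and reflexivity of $\rwp_\beta$ suffices. The remaining binder/constructor cases $\trm{M} = \trm{\lambda y.R}$, $\trm{M} = \trm{R +a S}$, and $\trm{M} = \trm{!a.R}$ are routine: the induction hypothesis applied to the subterms combines with the compatibility of $\rwp_\beta$ with each of these constructors to yield the claim.

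The interesting case is application $\trm{M} = \trm{PQ}$. If the displayed application is unlabelled, then $\trm{M'} = \trm{P'Q'}$ with $\trm{P}\rwp_\beta\trm{P'}$ and $\trm{Q}\rwp_\beta\trm{Q'}$; by induction $\trm{P[N/x]}\rwp_\beta\trm{P'[N'/x]}$ and $\trm{Q[N/x]}\rwp_\beta\trm{Q'[N'/x]}$, and the first clause of \Cref{lemma:application-parallel-beta} assembles these into the required $\trm{P[N/x]Q[N/x]}\rwp_\beta\trm{P'[N'/x]Q'[N'/x]}$. If instead the application is labelled, then $\trm{P} = \trm{\lambda y.R}$, $\trm{P'} = \trm{\lambda y.R'}$ with $\trm{R}\rwp_\beta\trm{R'}$, and $\trm{M'} = \trm{R'[Q'/y]}$; the induction hypothesis then gives $\trm{R[N/x]}\rwp_\beta\trm{R'[N'/x]}$ and $\trm{Q[N/x]}\rwp_\beta\trm{Q'[N'/x]}$, and the second clause of \Cref{lemma:application-parallel-beta} delivers $\trm{(\lambda y.R[N/x])Q[N/x]}\rwp_\beta\trm{R'[N'/x][Q'[N'/x]/y]}$.

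The main obstacle is identifying the source and target of this last parallel step with the expressions demanded by the lemma. The source $\trm{(\lambda y.R[N/x])Q[N/x]}$ is patently $\trm{M[N/x]}$ under the $\alpha$-convention. For the target we need the standard substitution-swap identity $\trm{R'[Q'/y][N'/x]} = \trm{R'[N'/x][Q'[N'/x]/y]}$, which is valid precisely because $\trm{y} \neq \trm{x}$ and $\trm{y}\notin\fv{N'}$ by our $\alpha$-renaming convention; this identifies the target with $\trm{M'[N'/x]}$ and closes the induction.
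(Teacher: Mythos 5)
Your proof is correct and takes essentially the same route as the paper, which simply states ``by straightforward induction on $M$'' without further detail; your case analysis on the labelled term, the appeal to \Cref{lemmaAppendix:application-parallel-beta} in the two application subcases, and the substitution-swap identity $\trm{R'[Q'/y][N'/x]} = \trm{R'[N'/x][Q'[N'/x]/y]}$ (justified by the $\alpha$-convention) are exactly the details that ``straightforward'' elides.
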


\begin{proof}
	By straightforward induction on $M$.
\end{proof}

\begin{lem}[Full labeling]
\label{lemma:full-development}
	Let $\trm{M}$ be a term, and $\trm{M*}$ be its \emph{full labeling}, which labels every $\beta$-redex in $\trm{M}$.
	\begin{enumerate}
		\item\label{lemma:full-development-beta} If $\trm{M} \rwp_\beta \trm{N}$ then $\trm{N} \rwp_\beta \trm{<M*>}$.
		\item\label{lemma:full-development-complete} If $\trm{M} \rwp \trm{N}$ then $\trm{N} \rwp \trm{<M*>}^\perm$.
	\end{enumerate}
\end{lem}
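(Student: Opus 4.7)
\medskip\noindent
\textbf{Part 1.} My plan is to proceed by induction on $M$, equivalently on the labelling $\trm{L}$ of $M$ with $\trm{<L> = N}$. The base case and the constructors $\trm{M = \x.P}$, $\trm{M = P+aQ}$, $\trm{M = !a.P}$ are all immediate from the induction hypothesis on the immediate subterm(s), since a parallel $\beta$-reduct preserves the top-level constructor. The core case is $\trm{M = PQ}$, split on whether $\trm{L}$ labels the outer position. If $\trm{L = L_1L_2}$ (outer position unlabelled), then $\trm{N = <L_1><L_2>}$; the induction hypothesis gives $\trm{<L_1>} \rwp_\beta \trm{<P*>}$ and $\trm{<L_2>} \rwp_\beta \trm{<Q*>}$, and \Cref{lemma:application-parallel-beta} yields $\trm{<L_1><L_2>} \rwp_\beta \trm{<M*>}$, uniformly whether $P$ is not an abstraction (so $\trm{<M*> = <P*><Q*>}$) or $\trm{P = \x.R}$ (so $\trm{<M*> = <R*>[<Q*>/x]}$, using the second clause of the lemma). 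If instead $\trm{L = (\x.L_1)*L_2}$ with $\trm{P = \x.R}$, then $\trm{N = <L_1>[<L_2>/x]}$; the induction hypothesis and \Cref{lemma:substitution} deliver $N \rwp_\beta \trm{<R*>[<Q*>/x]} = \trm{<M*>}$.

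\medskip\noindent
\textbf{Part 2.} Given $M \rwp N$, unfold the definition into $M \rwp_\beta M'$ and $M' \rwn_\perm N$, so $\trm{N = M'^\perm}$. Part~1 applied to $M \rwp_\beta M'$ produces $M' \rwp_\beta \trm{<M*>}$, witnessed by a labelling $\trm{L}$ of $M'$ with $\trm{<L> = <M*>}$. Now $\perm$-normalize $\trm{L}$ under the labelled reduction $\rwb_\perm$, which terminates and is confluent since (by \Cref{lem:parallel p-reduction}) it is a strategy in $\rws_\perm$ with the same normal forms as $\rw_\perm$. The resulting labelled normal form $\trm{L^\perm}$ has an underlying unlabelled term that is $\perm$-normal and reached from $M'$ by $\perm$-reduction; uniqueness of $\perm$-normal forms then forces it to be $\trm{M'^\perm} = N$. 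Hence $\trm{L^\perm}$ is a labelling of $N$, providing a parallel $\beta$-step $N \rwp_\beta \trm{<L^\perm>}$. \Cref{lem:exhaustive p - parallel beta} gives $\trm{<L>} =_\perm \trm{<L^\perm>}$, so $\trm{<L^\perm>^\perm = <L>^\perm = <M*>^\perm}$, and composing, $N \rwp_\beta \trm{<L^\perm>} \rwn_\perm \trm{<M*>^\perm}$, which is precisely $N \rwp \trm{<M*>^\perm}$.

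\medskip\noindent
\textbf{Main obstacle.} The technical crux lies in Part~2: propagating labels on $\beta$-redexes coherently through $\perm$-reduction, so that $\trm{L^\perm}$ is a well-defined labelling of $N$ whose associated parallel $\beta$-reduct agrees with $\trm{<M*>}$ modulo $\perm$. This rests on the design of $\rwb_\perm$, which carries labelled redexes past choice and generator constructors (with possible duplication under $\plusFun$ and similar rules), together with \Cref{lem:exhaustive p - parallel beta} to keep all labelled reducts of $\trm{L}$ within one $=_\perm$-class.
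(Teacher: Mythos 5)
Your proof is correct and follows essentially the same route as the paper: Part~1 is the same induction on $M$ with the same case split on whether the outer application is labelled, discharged by \Cref{lemma:application-parallel-beta} and \Cref{lemma:substitution}. For Part~2 the paper simply invokes \Cref{lem:p to complete reduction} to transport the parallel step $M'\rwp_\beta\trm{<M*>}$ to the $\perm$-normal forms, whereas you inline that lemma's proof (carrying the labelling through $\rwb_\perm$-normalization and closing with \Cref{lem:exhaustive p - parallel beta}); this is a presentational difference only.
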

\begin{proof}\hfill
	\begin{enumerate}
		\item	By induction on $\trm{M}$.
		Cases:
		\begin{itemize}
			\item \emph{Variable}: $\trm{M} = x$.
			Then $\trm{N} = x$ since $x$ is normal. 
			Moreover, $\trm{M*} = x$ and hence $\trm{<M*>} = x$.
			Therefore, $\trm{N} \rwp \trm{<M*>}$ because $\rwp_\beta$ is reflexive.
			
			\item \emph{Abstraction}: $\trm{M} = \trm{\x.P}$.
			Then $\trm{N} = \trm{\x.<P^\circ>}$ for some labeling $\trm{P^\circ}$ of $\trm{P}$.
			Since $\trm{P} \rwp_\beta \trm{<P^\circ>}$, one has $\trm{<P^\circ>} \rwp_\beta \trm{<P*>}$ and hence $\trm{N} = \trm{\x.<P^\circ>} \rwp_\beta \trm{\x.<P*>} = \trm{<M*>}$.
			
			\item \emph{Application}: $\trm{M} = \trm{PQ}$.
			Let $\trm{M^\circ}$ be a labeling of $\trm{M}$ such that $N = \trm{<M^\circ>}$.
			There are three subcases:
			\begin{enumerate}
				\item $\trm{M} \neq \trm{(\x.R)Q}$. 
				Then $\trm{N} = \trm{<P^\circ><Q^\circ>}$.
				By \ih\ (since $\trm{P} \rwp_\beta \trm{<P^\circ>}$ and $\trm{Q} \rwp_\beta \trm{<Q^\circ>}$), one has $\trm{<P^\circ>} \rwp_\beta \trm{<P*>}$ and $\trm{<Q^\circ>} \rwp_\beta \trm{<Q*>}$.
				Therefore, by \Cref{lemma:application-parallel-beta}, $\trm{N} \rwp_\beta \trm{<P*><Q*>} = \trm{<M*>}$, where the equality holds because $\trm{M} \neq \trm{(\x.R)Q}$.
				
				\item $\trm{M^\circ} = \trm{(\x.R)*Q}$.
				Then $\trm{N} = \trm{<R^\circ>[<Q^\circ>/x]}$.
				By \ih\ (since $\trm{R} \rwp_\beta \trm{<R^\circ>}$ and $\trm{Q} \rwp_\beta \trm{<Q^\circ>}$), one has $\trm{<R^\circ>} \rwp_\beta \trm{<R*>}$ and $\trm{<Q^\circ>} \rwp_\beta \trm{<Q*>}$.
				By \Cref{lemma:substitution}, $\trm{N} = \trm{<R^\circ>[<Q^\circ>/x]} \rwp_\beta \trm{<R*>[<Q*>/x]}  = \trm{<M*>}$ where the last equality holds because $\trm{M} = \trm{(\x.R)Q}$ and $\trm{M*}$ is the full labeling of $M$.
				
				\item $\trm{M^\circ} \neq \trm{(\x.R)*Q}$ but $\trm{M} = \trm{(\x.R)Q}$ (\ie\ $P = \lambda x.R $). 
				Then $\trm{N} = \trm{<P^\circ><Q^\circ>}$.
				By \ih\ (since $\trm{P} \rwp_\beta \trm{<P^\circ>} = \trm{<\x.R^\circ>}$ and $\trm{Q} \rwp_\beta \trm{<Q^\circ>}$), one has $\trm{<P^\circ>} \rwp_\beta \trm{<P*>} = \trm{<\x.R*>}$ and $\trm{<Q^\circ>} \rwp_\beta \trm{<Q*>}$.
				Therefore, by \Cref{lemma:application-parallel-beta}, $\trm{N} \rwp_\beta \trm{<R*>[<Q*>/x]} = \trm{<M*>}$, where the equality holds because $\trm{M} = \trm{(\x.R)Q}$ and $\trm{M*}$ is the full labeling of $M$.
			\end{enumerate}
		\end{itemize}
	
		\item Since $\trm{M} \rwp \trm{N}$, then $\trm{M} \rwp_\beta \trm{P} \rwn_\perm \trm{N}$ for some term $\trm{P}$ such that $\trm{N} = \trm{P}^\perm$.
		By \Cref{lemma:full-development}.\ref{lemma:full-development-beta}, $\trm{P} \rwp_\beta \trm{<M*>}$ and hence $\trm{P} \rwp \trm{<M*>^\perm}$.
		According to \Cref{lem:p to complete reduction}, $\trm{N} = \trm{P}^\perm \rwp \trm{(<M*>^\perm)^\perm} = \trm{<M*>^\perm}$.
		\qedhere
	\end{enumerate}
\end{proof}

\begin{cor}[Complete reduction is diamond]
\label{cor:complete diamond}
	If $P\rwpleft N\rwp M$ then $P\rwp Q\rwpleft M$ for some $Q$.
\end{cor}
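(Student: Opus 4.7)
The plan is to derive the diamond property directly from part (2) of \Cref{lemma:full-development}. The key observation is that the full labeling $\trm{N*}$ of the source term $N$ determines a canonical target $\trm{<N*>}^\perm$ that depends only on $N$, and the lemma guarantees that every complete reduct of $N$ reduces via $\rwp$ to this canonical target. So the diamond is closed by choosing this target as the common reduct $Q$.

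Concretely, given $P \rwpleft N \rwp M$, I would let $\trm{N*}$ be the full labeling of $N$ (every $\beta$-redex marked) and set $Q \coloneq \trm{<N*>}^\perm$. Applying \Cref{lemma:full-development}(2) to the step $N \rwp M$ yields $M \rwp Q$, and applying the same lemma to $N \rwp P$ yields $P \rwp Q$. Together, $P \rwp Q \rwpleft M$, which is exactly what is required.

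I do not expect any obstacle, since the hard work has already been carried out in \Cref{lemma:full-development} (the Takahashi-style full-development argument, including the passage through $\perm$-normal forms via \Cref{lem:p to complete reduction}). The only subtlety worth spelling out is that both invocations of the lemma produce \emph{the same} target: the term $\trm{<N*>}^\perm$ is a function of $N$ alone, so the two closing legs of the diagram land at a single common reduct rather than at two different ones that would then have to be reconciled.
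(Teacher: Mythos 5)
Your proof is correct and is essentially the paper's own argument: both apply \Cref{lemma:full-development}.\ref{lemma:full-development-complete} to each leg of the peak and close the diamond at the $\perm$-normal form of the full development $\trm{<N*>}^\perm$, which depends only on $N$. No further comment is needed.
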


\begin{proof}
	By \Cref{lemma:full-development}.\ref{lemma:full-development-complete}, $P \rwp
	\! \trm{<N*>} \rwpleft M$ where $\trm{N*}$ is the full labeling of $\trm{N}$.
\end{proof}

\begin{figure}[!t]
\[
\begin{array}{cccc}
\vcenter{\hbox{\begin{tikzpicture}
	\matrix (m) [matrix of math nodes] {
	  \trm{N*} && \trm{M*} \\[20pt] \\ \trm{<N*>} & =_\perm & \trm{<M*>} \\
	};
	\draw[rwb] (m-1-1) --node[above]{$\scriptstyle\perm$} (m-1-3);
	\draw[rwp] (m-1-1) --node[left] {$\scriptstyle\beta$} (m-3-1);
	\draw[rwp] (m-1-3) --node[right]{$\scriptstyle\beta$} (m-3-3);
\end{tikzpicture}}}
&
\vcenter{\hbox{\begin{tikzpicture}
	\matrix (m) [matrix of math nodes] {
	  \trm{N*} && \trm{M*} \\[20pt] \\ \trm{<N*>} & =_\perm & \trm{<M*>} \\
	};
	\draw[rwn] (m-1-1) --node[above]{$\scriptstyle\perm$} (m-1-3);
	\draw[rwp] (m-1-1) --node[left] {$\scriptstyle\beta$} (m-3-1);
	\draw[rwp] (m-1-3) --node[right]{$\scriptstyle\beta$} (m-3-3);
\end{tikzpicture}}}
&
\vcenter{\hbox{\begin{tikzpicture}
	\matrix (m) [matrix of math nodes] {
	  \trm{N} &[20pt] & \trm{M} \\[20pt] \\ N^\perm && M^\perm \\
	};
	\draw[rw]  (m-1-1) -- (m-1-3);
	\draw[rwn] (m-1-1) --node[left] {$\scriptstyle\perm$} (m-3-1);
	\draw[rwn] (m-1-3) --node[right]{$\scriptstyle\perm$} (m-3-3);
	\draw[rwx] (m-3-1) -- (m-3-3);
\end{tikzpicture}}}
&
\vcenter{\hbox{\begin{tikzpicture}
	\matrix (m) [matrix of math nodes] {
	  \trm{N} &[20pt] & \trm{M} \\[20pt] \\ \trm{P} && \trm{Q} \\
	};
	\draw[rwp] (m-1-1) -- (m-1-3);
	\draw[rwp] (m-1-1) -- (m-3-1);
	\draw[rwp] (m-1-3) -- (m-3-3);
	\draw[rwp] (m-3-1) -- (m-3-3);
\end{tikzpicture}}}
\\ \\
  \text{\Cref{lem:parallel p - parallel beta}}
& \text{\Cref{lem:exhaustive p - parallel beta}}
& \text{\Cref{lem:p to complete reduction}}
& \text{\Cref{cor:complete diamond}}
\end{array}
\]
\caption{Diagrams for the lemmata leading up to confluence}
\label{fig:confluence diagrams}
\end{figure}

Since complete reduction $\rwp$ is diamond, the confluence of $\rw$ easily follows.

\begin{thm}
\label{thm:confluence}
Reduction $\rw$ is confluent.
\end{thm}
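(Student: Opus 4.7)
The plan is to lift the diamond property of complete reduction $\rwp$ (\Cref{cor:complete diamond}) to confluence of $\rw$, using \Cref{lem:p to complete reduction} as a bridge between the two relations.

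First, I would record that $\rwp \subseteq \rws$: by definition a complete step is a parallel $\beta$-step followed by $\perm$-reduction to normal form, and a parallel $\beta$-step is a $\beta$-reduction by the earlier lemma on parallel steps, so every complete step decomposes into a sequence of $\rw$-steps. Consequently, any chain of complete steps is itself an $\rws$-reduction.

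Next, by the standard tiling argument, the diamond of $\rwp$ lifts to confluence of its reflexive-transitive closure: given two chains of complete steps from a common source, one fills in the rectangle one diamond at a time via \Cref{cor:complete diamond} to reach a common reduct. Then I would use \Cref{lem:p to complete reduction} as the bridge: each step $N \rw M$ induces a complete step $N^\perm \rwp M^\perm$, so by an easy induction on length, $N \rws M$ implies that $N^\perm$ reaches $M^\perm$ by a chain of complete steps.

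To close the argument, suppose $N \rws P$ and $N \rws M$. Applying the bridge, $N^\perm$ reaches both $P^\perm$ and $M^\perm$ by complete steps, and by confluence of the reflexive-transitive closure of $\rwp$ there exists $Q$ with $P^\perm$ and $M^\perm$ both reaching $Q$ by complete steps. Since $P \rws P^\perm$ via $\rwn_\perm \subseteq \rws$, and every complete step is an $\rws$-reduction as noted above, we conclude $P \rws Q$; symmetrically $M \rws Q$, closing the diagram.

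I do not anticipate a significant obstacle: the heavy lifting has been done in establishing the diamond property of $\rwp$ and the projection onto $\perm$-normal forms. The only detail requiring explicit verification is the containment $\rwp \subseteq \rws$, which is immediate from unfolding the definition of complete reduction.
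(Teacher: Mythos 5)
Your proposal is correct and follows essentially the same route as the paper: project both legs onto $\perm$-normal forms via Lemma~\ref{lem:p to complete reduction}, tile the resulting square using the diamond property of complete reduction (Corollary~\ref{cor:complete diamond}), and return to $\rws$ by noting that $\rwn_\perm$ and $\rwp$ are both contained in $\rws$. The paper presents this as a single commuting diagram, but the content is identical.
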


\begin{proof}
By the following diagram. For the top and left areas, by Lemma~\ref{lem:p to complete reduction} any reduction path $N\rws M$ maps onto one $N^\perm \rwx M^\perm$. The main square follows by the diamond property of complete reduction, \Cref{cor:complete diamond}.

\[
\begin{tikzpicture}
	\matrix (m) [matrix of math nodes] {
	  N &[10pt] &[10pt] M &[10pt] \\[10pt] & N^\perm && M^\perm \\[10pt] P \\[10pt] & P^\perm && Q \\
	};
	\draw[rws] (m-1-1) -- (m-1-3);
	\draw[rws] (m-1-1) -- (m-3-1);
	\draw[rwn] (m-1-1) --node[below left =-2pt] {$\scriptstyle\perm$} (m-2-2);
	\draw[rwn] (m-1-3) --node[above right=-2pt]{$\scriptstyle\perm$} (m-2-4);
	\draw[rwn] (m-3-1) --node[below left =-2pt] {$\scriptstyle\perm$} (m-4-2);
	\draw[rwx] (m-2-2) -- (m-2-4);
	\draw[rwx] (m-2-2) -- (m-4-2);
	\draw[rwx] (m-2-4) -- (m-4-4);
	\draw[rwx] (m-4-2) -- (m-4-4);
\end{tikzpicture}
\qedhere
\]
\end{proof}


\section{Strong normalization for simply typed terms}
\label{sec:SN}

In this section, we prove that the relation $\rw$ enjoys strong
normalization in simply typed terms. Our proof of strong
normalisation is based on the classic reducibility technique,
and inherently has to deal with label-open terms. It thus make great
sense to turn the order $<_{{\trm{M}}}$ from
Definition~\ref{def:orderlabels} into something more formal, at the same time
allowing terms to be label-\emph{open}. This is in
Figure~\ref{fig:order}.
\begin{figure}
  \fbox{
    \begin{minipage}{.97\textwidth}
      \begin{tabular}{ll}        
        \textsf{Label Sequences}: & $\theta\quad\coloneqq\quad \varepsilon ~\mid~ a\cdot\theta$\\[5pt]
        \textsf{Label Judgments}: & $\xi\quad\coloneqq\quad \labjudg{\theta}{M}$\\
        \begin{minipage}{.25\textwidth}\textsf{Label Rules}:\end{minipage} &
        \begin{minipage}{.7\textwidth}
        \[
        \begin{array}{c}        
        \infer{\labjudg{\theta}{x}}{}\qquad\quad
        \infer{\labjudg{\theta}{\x.M}}{\labjudg{\theta}{M}}\qquad\quad
        \infer{\labjudg{\theta}{\ttrm{!a.M}}}{\labjudg{a\cdot\theta}{M}}
        \\\\
        \infer{\labjudg{\theta}{MN}}{\labjudg{\theta}{M} & \labjudg{\theta}{N}}\qquad\quad
        \infer{\labjudg{\theta}{\ttrm{M +a N}}}{\labjudg{\theta}{M} & \labjudg{\theta}{N} & a\in\theta}
        \end{array}
        \]
        \end{minipage}
        \\
      \end{tabular}
    \end{minipage}
  }
  \label{fig:order}
  \caption{Labelling Terms}
\end{figure}
%
It is easy to realize that, of course modulo label $\alpha$-equivalence, for
every term $M$ there is at least one $\theta$ such that $\labjudg{\theta}{M}$.
An easy fact to check is that if $\labjudg{\theta}{M}$ and $M\rw^\theta N$, then $\labjudg{\theta}{N}$.
It thus makes sense to parametrize $\rw$ on a sequence
of labels $\theta$, i.e., one can define a family of reduction
relations $\rw^\theta$ on pairs in the form $(M,\theta)$.
The set of strongly normalizable terms, and the number of steps
to normal forms become themselves parametric:
\begin{itemize}
\item
  The set $\mathit{SN}^\theta$ of those terms $M$ such
  that $\labjudg{\theta}{M}$ and $(M,\theta)$ is strongly
  normalizing modullo $\rw^\theta$;
\item
  The function $\mathit{sn}^\theta$ assigning to any
  term in $\mathit{SN}^\theta$ the maximal number of $\rw^\theta$
  steps to normal form.
\end{itemize}

Now, let us define types, environments, judgments, and typing rules in
Figure~\ref{fig:typing}.
\newcommand{\arrow}{\Rightarrow}
\newcommand{\judg}[3]{#1\vdash #2:#3}
\begin{figure}
  \fbox{
    \begin{minipage}{.97\textwidth}
      \begin{tabular}{ll}        
        \textsf{Types}: & $\tau\quad\coloneqq\quad \alpha ~\mid~ \tau\arrow\rho$\\[5pt]
        \textsf{Environments}: & $\Gamma\quad\coloneqq\quad x_1:\tau_1,\ldots,x_n:\tau_n$\\[5pt]
        \textsf{Judgments}: & $\pi\quad\coloneqq\quad \judg{\Gamma}{M}{\tau}$\\
        \begin{minipage}{.25\textwidth}\textsf{Typing Rules}:\end{minipage} &
        \begin{minipage}{.7\textwidth}
        \[
        \begin{array}{c}
        \infer{\judg{\Gamma,x:\tau}{x}{\tau}}{}\qquad\quad
        \infer{\judg{\Gamma}{\x.M}{\tau\arrow\rho}}{\judg{\Gamma,x:\tau}M\rho}\qquad\quad
        \infer{\judg{\Gamma}{\ttrm{!a.M}}{\tau}}{\judg{\Gamma}{M}{\tau}}
        \\\\
        \infer{\judg{\Gamma}{MN}{\rho}}{\judg{\Gamma}{M}{\tau\arrow\rho} & \judg{\Gamma}{N}{\tau}}\qquad\quad
        \infer{\judg{\Gamma}{\ttrm{M +a N}}{\tau}}{\judg{\Gamma}{M}{\tau} & \judg{\Gamma}{N}{\tau}}
        \end{array}
        \]
        \end{minipage}
        \\
      \end{tabular}
    \end{minipage}
  }
  \caption{Types, Environments, Judgments, and Rules}
  \label{fig:typing}
\end{figure}
Please notice that the type structure is precisely the one of the
usual, vanilla, simply-typed $\lambda$-calculus (although terms are of
course different),

\begin{lem}
\label{lemma:cloredsum}
  The following rule is sound:
  $$
  \infer{\trm{M +a N}L_1\in\mathit{SN}^\theta\ldots L_m\in\mathit{SN}^\theta}{ML_1\ldots L_m & NL_1\ldots L_m\in\mathit{SN}^\theta & a\in\theta}
  $$
\end{lem}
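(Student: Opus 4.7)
The plan is to prove the rule by well-founded induction on a lexicographic tuple $\mu = (m, r_a, h, \sigma)$, where $m = |\vec L|$ is the number of arguments, $r_a$ is the depth of the outer label $a$ in the (finite) label-binder tree of $\theta$ (smaller depth meaning a more outer binder), $h = \mathit{sn}^\theta(M\vec L) + \mathit{sn}^\theta(N\vec L)$, and $\sigma = |(M+_aN)\vec L|_\perm$ is the length of the longest $\perm$-only reduction starting from the whole term (finite by \Cref{lemma:strong-normalization}). A preliminary sub-lemma to establish beforehand: whenever $(P+_bQ)\vec L' \in \mathit{SN}^\theta$, both $P\vec L'$ and $Q\vec L'$ lie in $\mathit{SN}^\theta$, with $\mathit{sn}^\theta(P\vec L') < \mathit{sn}^\theta((P+_bQ)\vec L')$ when $\vec L' \neq \emptyset$; this follows from $(P+_bQ)\vec L' \rws_\perm (P\vec L') +_b (Q\vec L')$ via iterated $\plusFun$, together with lifting of reductions of each branch into the sum form. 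An analogous statement holds for $\plusArg$-expansions.

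To show $(M+_aN)\vec L \in \mathit{SN}^\theta$ I would verify that every one-step reduct lies in $\mathit{SN}^\theta$ and invoke the IH. Reductions inside $M$, $N$, or some $L_j$ lift to reductions of $M\vec L$ or $N\vec L$, strictly decreasing $h$. The $\idem$ step, when $M = N$, has reduct $N\vec L$, already in $\mathit{SN}^\theta$. For $\cancelL/\cancelR$, the reduct keeps $m$ and $r_a$ unchanged; $h$ weakly decreases by the sub-lemma, and $\sigma$ strictly decreases because the step is a $\perm$-step on the whole term. For $\plusFun$, and for $\plusArg$ when $L_1$ is a sum $X+_bY$, the reduct has the form $(M'+_{a\,\text{or}\,b}N')\vec L'$ with $|\vec L'| < m$: after verifying $M'\vec L', N'\vec L' \in \mathit{SN}^\theta$ through one inner IH call with strictly smaller $h$ (using the sub-lemma), the outer IH closes since $m$ strictly decreases.

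The delicate case is $\plusL$ (and symmetrically $\plusR$). With $M = M_1+_cM_2$ and $c<a$, the reduct is $R' = ((M_1+_aN)+_c(M_2+_aN))\vec L$, which needs two nested IH invocations. First, the IH applied to each $(M_i, N, \vec L, a)$ for $i=1,2$ establishes $(M_i+_aN)\vec L \in \mathit{SN}^\theta$: here $(m, r_a)$ is unchanged, $h$ weakly decreases by the sub-lemma (since $\mathit{sn}^\theta(M_i\vec L) \leq \mathit{sn}^\theta(M\vec L)$), and $\sigma$ strictly decreases because $(M+_aN)\vec L$ takes at least one $\perm$-step before dominating the simpler $(M_i+_aN)\vec L$ as a subterm. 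Then the IH applied to the outer tuple $(M_1+_aN, M_2+_aN, \vec L, c)$ handles $R'$: here $m$ is unchanged but $r_c < r_a$ (since $c<a$), so the lex measure strictly decreases via $r_a$, regardless of how the new sn-sum compares with $h$.

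The main obstacle is precisely this outer IH call in $\plusL/\plusR$: a priori $\mathit{sn}^\theta((M_i+_aN)\vec L)$ need not be bounded by $h$, so $h$ alone cannot serve as the primary component of the measure. The resolution is the ordering of the lex tuple: $m$ at the top absorbs $\plusFun/\plusArg$ reductions (where the outer label may switch to an arbitrary $b$ with no order relation to $a$); $r_a$ next absorbs $\plusL/\plusR$, whose side condition $c<a$ guarantees a strictly more outer label; and $h, \sigma$ serve as tie-breakers for inner and cancellation steps. Since the label tree is finite and all components are natural numbers, the measure is well-founded, and the induction closes.
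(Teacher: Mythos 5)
Your proof is correct and follows essentially the same route as the paper's: a lexicographic induction whose first component is the number of arguments $m$ (absorbing $\plusFun/\plusArg$), whose second is the level of the label $a$ in $\theta$ (absorbing $\plusL/\plusR$, using the side condition $c<a$), and whose remaining components are tie-breakers for inner reductions and cancellations, with the $\plusL$ case handled by the same two nested invocations of the induction hypothesis. The only differences are cosmetic: the paper's last two components are $\sum_i \mathit{sn}^\theta(L_i)+\mathit{sn}^\theta(M)+\mathit{sn}^\theta(N)$ and the syntactic size $|M|+|N|$, where you use $\mathit{sn}^\theta(M\vec L)+\mathit{sn}^\theta(N\vec L)$ and the length of the longest $\perm$-reduction of the whole term, and the paper leaves your extraction sub-lemma implicit.
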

\begin{proof}
  Let $h(a,\theta)$ be the height of $a$ in the sequence $\theta$.
  The proof goes by lexicographic induction on the following quadruple:
  \begin{equation}\label{equ:redord}
  (m,h(a,\theta),\sum_{i=1}^m\mathit{sn}^\theta(L_i)+\mathit{sn}^\theta(M)+\mathit{sn}^\theta(N),|M|+|N|)
  \end{equation}
  We proceed by showing that all terms to which
  $\trm{M +a N}L_1\ldots L_m$ reduces are in $\mathit{SN}^\theta$, and thus
  $\trm{M +a N}L_1\ldots L_m$ is itself in $\mathit{SN}^\theta$:
  \begin{itemize}
  \item
    If reduction happens in one between $M,N,L_1,\ldots,L_m$, then we
    are done, because we get an instance of the rule where the first two
    components of (\ref{equ:redord}) stay constant, and the third
    strictly decreases.
  \item
    If reduction happens at the leftmost component $\trm{M +a N}$
    due to the rule $\trm{P +a P}\rw_\perm P$, then one of the two
    hypotheses apply trivially.
  \item
    If reduction happens at the leftmost component $\trm{M +a N}$
    due to the rule $\trm{(P +a Q) +a R}\rw_\perm \trm{P +a R}$,
    then the \ih\ holds due to the fourth component being strictly
    smaller. Similarly when the rule is
    $\trm{P +a (Q +a R)}\rw_\perm \trm{P +a R}$.    
  \item
    If reduction happens in the leftmost application due to
    the rule $\trm{(P +a Q) R}\rw_\perm \trm{(PR) +a (QR)}$,
    then the \ih\ can be applied, because the number of arguments $m$
    strictly decreases. Similarly when the rule is
    $\trm{N (M +a P)}\rw_\perm\trm{(NM) +a (NP)}$.
  \item
    If reduction happens at the leftmost component $\trm{M +a N}$
    due to the rule $\trm{(P +b Q) +a R}\rw_\perm \trm{(P +a R) +b (Q +a R)}$,
    then $M$ can be written as $\trm{P +b Q}$, and the
    following two terms are strongly normalizing (because
    $\trm{(P +b Q)L_1\ldots L_m}$ is by hypothesis strongly normalizing
      itself):
    $$
    PL_1\ldots L_m\qquad QL_1\ldots L_m
    $$
    Moreover, $\mathit{sn}^\theta(PL_1\ldots L_m),\mathit{sn}^\theta(QL_1\ldots
    L_m)\leq\mathit{sn}^\theta(\trm{(P +b Q)L_1\ldots L_m}$.  We can then
    conclude that $\trm{(P +a N)}L_1\ldots L_m$ and $\trm{(Q +a
        N)}L_1\ldots L_m$ are both strongly normalizing by \ih, because
    the fourth component is strictly smaller. Finally, we can conclude
    again by induction hypothesis, since the number of arguments stays
    the same, while the level of $b$ must be smaller than that of
    $a$. Similarly if the rule applied is $\trm{P +a (Q +b
      R)}\rw_\perm \trm{(P +a Q) +b (P +a R)}$.
    \qedhere
  \end{itemize}
\end{proof}

\newcounter{lemma:cloredbox}
\addtocounter{lemma:cloredbox}{\value{defn}}
\begin{lem}\label{lemma:cloredbox}
  The following rule is sound:
  \marginpar{\footnotesize Proof in the Appendix}
  $$
  \infer{\trm{(!a.M)}L_1\ldots L_m\in\mathit{SN}^\theta}{\trm{M}L_1\ldots L_m\in\mathit{SN}^{a\cdot\theta} & \forall i.a\not\in L_i}
  $$
\end{lem}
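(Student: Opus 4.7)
The plan is to mirror the proof of \Cref{lemma:cloredsum}, showing that every one-step $\rw^\theta$-reduct of $\trm{(!a.M)}L_1\ldots L_m$ lies in $\mathit{SN}^\theta$, by lexicographic induction on the triple
\[
  \big(m,\ \mathit{sn}^{a\cdot\theta}(\trm{M}L_1\ldots L_m),\ |\trm{M}|\big).
\]
Since the head is a generator, no $\beta$-redex fires at the outer application, and the possible one-step reducts are: (i) a reduction strictly inside $\trm{M}$ or some $L_i$; (ii) the $\boxFun$ step $\trm{(!a.M)}L_1\rw_\perm\trm{!a.(ML_1)}$, when $m\geq 1$; (iii) the $\boxVoid$ step $\trm{!a.M}\rw_\perm\trm{M}$, when $a\notin\trm{M}$; (iv) the $\plusBox$ step when $\trm{M}=\trm{N +b P}$ with $a\neq b$ (always ensurable by $\alpha$-renaming); and (v) a $\plusArg$ step at some application whose argument $L_i$ is a choice $\trm{N' +b P'}$.

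Cases (i)--(iii) are immediate. In (i), the outer shape is preserved, $a$ cannot enter any $L_i'$ by reduction, and the second coordinate of the measure strictly drops, so the \ih\ applies. Case (ii) drops the first coordinate to $m-1$ with the premise $\trm{(ML_1)}L_2\ldots L_m\in\mathit{SN}^{a\cdot\theta}$ unchanged from the hypothesis. Case (iii) produces the reduct $\trm{M}L_1\ldots L_m$, in which $a$ occurs nowhere free (as no $L_i$ contains it), so its $\rw^\theta$ and $\rw^{a\cdot\theta}$ reductions coincide and it lies in $\mathit{SN}^\theta$ at once.

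Cases (iv) and (v) produce a sum at the top and are discharged by combining \Cref{lemma:cloredsum} with the \ih\ of the current lemma. For (iv) the reduct is $\trm{((!a.N) +b (!a.P))}L_1\ldots L_m$, with $b\in\theta$ because $b\in a\cdot\theta$ and $b\neq a$; \Cref{lemma:cloredsum} reduces the goal to showing that $\trm{(!a.N)}L_1\ldots L_m$ and $\trm{(!a.P)}L_1\ldots L_m$ lie in $\mathit{SN}^\theta$, which follows from the \ih: the term $\trm{M}L_1\ldots L_m=\trm{(N +b P)}L_1\ldots L_m$ permutatively reduces, via a chain of $\plusFun$ steps, to $\trm{(NL_1\ldots L_m) +b (PL_1\ldots L_m)}$, so both $\trm{N}L_1\ldots L_m$ and $\trm{P}L_1\ldots L_m$ lie in $\mathit{SN}^{a\cdot\theta}$ with $\mathit{sn}^{a\cdot\theta}$ no greater than the original, and since $|\trm{N}|,|\trm{P}|<|\trm{M}|$ the lexicographic measure strictly decreases. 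Case (v) is analogous: a $\plusArg$/$\plusFun$ chain on $\trm{M}L_1\ldots L_m$ certifies $\mathit{SN}^{a\cdot\theta}$ of the two components $\trm{(!a.M)}L_1\ldots L_{i-1}N'L_{i+1}\ldots L_m$ and $\trm{(!a.M)}L_1\ldots L_{i-1}P'L_{i+1}\ldots L_m$ (to which we apply the \ih, with $m$ and $\trm{M}$ unchanged but $\mathit{sn}^{a\cdot\theta}$ strictly smaller), and then \Cref{lemma:cloredsum} finishes.

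The technically heaviest case is $\plusBox$: it duplicates the generator and forces us to invoke the sibling \Cref{lemma:cloredsum}, and the measure is calibrated precisely so that when $\mathit{sn}^{a\cdot\theta}$ fails to strictly decrease (which can happen when $m=0$), the size coordinate $|\trm{M}|$ secures lexicographic progress.
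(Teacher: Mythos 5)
Your proof is correct and follows essentially the same route as the paper's: a lexicographic induction mirroring Lemma~\ref{lemma:cloredsum}, with the $\plusBox$ case discharged by combining that lemma with the inductive hypothesis and using $|\trm{M}|$ to break ties when the $\mathit{sn}$ component does not strictly decrease. The only, inessential, difference is your choice of second measure component, $\mathit{sn}^{a\cdot\theta}(\trm{M}L_1\ldots L_m)$, where the paper uses $\sum_{i=1}^m\mathit{sn}^{a\cdot\theta}(L_i)+\mathit{sn}^{a\cdot\theta}(M)$.
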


\begin{lem}\label{lemma:cloredheadvar}
  The following rule is sound
  $$
  \infer{xL_1\ldots L_m\in\mathit{SN}^\theta}{L_1\in\mathit{SN}^\theta &\cdots & L_m\in\mathit{SN}^\theta}
  $$
\end{lem}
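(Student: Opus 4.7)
The plan is to induct on the lexicographic pair $(\sum_{i=1}^m \mathit{sn}^\theta(L_i),\, \sum_{i=1}^m |L_i|)$, where $|{\cdot}|$ denotes term size. The base case $m=0$ is immediate, as $x$ is itself a normal form. For $m \geq 1$, I would show that every one-step $\rw^\theta$-reduct of $xL_1\ldots L_m$ lies in $\mathit{SN}^\theta$; since $\rw^\theta$ is finitely branching, this suffices.

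First I would classify the possible reducts. By inspection of the rules in Figure~\ref{fig:reduction rules}, only two kinds of steps issue from $xL_1\ldots L_m$: (i) an internal step $L_i \rw^\theta L_i'$ in some argument, yielding $xL_1\ldots L_i'\ldots L_m$; and (ii) a $\plusArg$-step at a spine application node whose argument $L_j$ has the shape $\trm{M +a P}$, producing the reduct $\trm{(NM) +a (NP)}\,L_{j+1}\ldots L_m$, where $N = xL_1\ldots L_{j-1}$. No other reductions fire on the spine: the rules $\beta$, $\plusFun$, and $\boxFun$ each require the function part of a spine application to be an abstraction, a $\trm{+a}$-term, or a $\trm{!a}$-term respectively, none of which is possible for a variable-headed context; every remaining rule has $\lambda$, $\trm{+a}$, or $\trm{!a}$ at the root of its left-hand side and so can only match inside some $L_i$, placing it in case (i).

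For case (i), $\mathit{sn}^\theta(L_i') < \mathit{sn}^\theta(L_i)$, so the first component of the measure strictly drops and the IH applies. For case (ii), I would note that $M$ and $P$ are subterms of $L_j \in \mathit{SN}^\theta$, hence themselves belong to $\mathit{SN}^\theta$ with $\mathit{sn}^\theta(M), \mathit{sn}^\theta(P) \leq \mathit{sn}^\theta(L_j)$, and that $|M|, |P| < |L_j|$. Consequently the argument list $L_1,\ldots,L_{j-1},M,L_{j+1},\ldots,L_m$ is strictly smaller in lex measure (either the first component decreases or, if equal, the second does), so the IH yields $xL_1\ldots L_{j-1}ML_{j+1}\ldots L_m \in \mathit{SN}^\theta$ and, analogously, the same with $P$ in place of $M$. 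Since $a \in \theta$ is forced by the labelling judgment $\labjudg{\theta}{L_j}$, \Cref{lemma:cloredsum} then closes the case.

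The main subtlety I anticipate is the borderline subcase of (ii) where $\mathit{sn}^\theta(M) = \mathit{sn}^\theta(L_j)$: the primary measure fails to decrease, which is why a secondary size component is indispensable in the ordering. This mirrors the role played by the fourth component in the lex order used in the proof of \Cref{lemma:cloredsum}.
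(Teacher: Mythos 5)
Your proof is correct, and it is in fact more careful than the paper's, which dismisses the lemma with the single sentence that $xL_1\ldots L_m$ ``cannot create new redexes.'' Read literally, that claim glosses over precisely the point you isolate: a $\plusArg$-step \emph{can} fire at a spine node when some $L_j$ has the form $\trm{M +a P}$, and the resulting term $\trm{((NM) +a (NP))}L_{j+1}\ldots L_m$ is no longer variable-headed, so the induction hypothesis cannot be applied to it directly and one must route through Lemma~\ref{lemma:cloredsum} exactly as you do. Your classification of the possible root steps is right ($\beta$, $\plusFun$ and $\boxFun$ all need a non-variable head on the spine), your lexicographic measure $(\sum_i \mathit{sn}^\theta(L_i), \sum_i |L_i|)$ does strictly decrease in both cases (with the secondary size component doing real work when $\mathit{sn}^\theta(M)=\mathit{sn}^\theta(L_j)$), and the side condition $a\in\theta$ is indeed supplied by $\labjudg{\theta}{L_j}$, which is part of the definition of $L_j\in\mathit{SN}^\theta$. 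What the paper's terse argument presumably intends is that no \emph{$\beta$}-redex can ever be created at the head, so that termination reduces to termination of the $L_i$ together with strong normalization of the permutations; your proof is the honest, fully spelled-out version of that intuition, structured in the same style as the paper's proofs of Lemmas~\ref{lemma:cloredsum}--\ref{lemma:cloredbox}.
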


\begin{proof}
	Trivial, since the term $xL_1\ldots L_m$ cannot create new redexes.
\end{proof}

\newcounter{lemma:cloredbeta}
\addtocounter{lemma:cloredbeta}{\value{defn}}
\begin{lem}\label{lemma:cloredbeta}
  The following rule is sound
    \marginpar{\footnotesize Proof in the Appendix}
  $$
  \infer{\trm{(\x.M)}L_0\ldots L_m\in\mathit{SN}^\theta}{\trm{M[L_0/x]}L_1\ldots L_m\in\mathit{SN}^\theta & L_0\in\mathit{SN}^\theta}
  $$
\end{lem}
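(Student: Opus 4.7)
The plan is to argue that $\trm{(\x.M)}L_0\ldots L_m\in\mathit{SN}^\theta$ by showing that every one-step $\rw^\theta$-reduct lies in $\mathit{SN}^\theta$. I would proceed by well-founded induction on the lexicographic pair
$$\bigl(\,\mathit{sn}^\theta(\trm{M[L_0/x]}L_1\ldots L_m)\;+\;\mathit{sn}^\theta(L_0),\ |M|\,\bigr).$$
Note that $L_0\in\mathit{SN}^\theta$ and $\trm{M[L_0/x]}L_1\ldots L_m\in\mathit{SN}^\theta$ already imply each $L_i\in\mathit{SN}^\theta$ and $M\in\mathit{SN}^\theta$ (as subterms or via one-step simulation), so all quantities are finite.

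I would then enumerate the possible reducts of $\trm{(\x.M)}L_0\ldots L_m$. The head $\beta$-step yields exactly the term assumed to be in $\mathit{SN}^\theta$. A reduction inside some $L_i$ with $i\geq 1$, or inside $M$ below the abstraction and not via $\plusAbs$ or $\boxAbs$, can be transported to a reduction of $\trm{M[L_0/x]}L_1\ldots L_m$, strictly decreasing the first component of the measure while the second stays fixed, so the inductive hypothesis applies. A reduction $L_0\rw L_0'$ is slightly subtler: if $x$ occurs free in $M$ then $\trm{M[L_0/x]}L_1\ldots L_m$ reduces in several steps to $\trm{M[L_0'/x]}L_1\ldots L_m$, which is in $\mathit{SN}^\theta$ with strictly smaller $\mathit{sn}^\theta$; if $x$ does not occur free in $M$ the substituted term is unchanged but $\mathit{sn}^\theta(L_0)$ strictly decreases. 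In either subcase the first component of the measure strictly drops, so the inductive hypothesis gives $\trm{(\x.M)}L_0'\ldots L_m\in\mathit{SN}^\theta$.

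The interesting cases are the permutative steps that change the head shape of $\lambda x.M$, namely $\plusAbs$ when $M=\trm{N_1+aN_2}$, giving $\trm{((\x.N_1)+a(\x.N_2))}L_0\ldots L_m$, and $\boxAbs$ when $M=\trm{!a.N}$ (with $a\notin\theta$), giving $\trm{(!a.\x.N)}L_0\ldots L_m$. In the first case, the two subterms $\trm{(\x.N_i)}L_0\ldots L_m$ arise from a strictly smaller second component $|M|$, since $|N_i|<|M|$, and $\trm{N_i[L_0/x]}L_1\ldots L_m$ is strongly normalizing because it is reachable from $\trm{M[L_0/x]}L_1\ldots L_m$ via a $\plusFun$-cascade (so the first component is non-increasing); the inductive hypothesis then gives $\trm{(\x.N_i)}L_0\ldots L_m\in\mathit{SN}^\theta$, and Lemma~\ref{lemma:cloredsum} reassembles $\trm{((\x.N_1)+a(\x.N_2))}L_0\ldots L_m\in\mathit{SN}^\theta$. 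Analogously, for $\boxAbs$ the inductive hypothesis (with smaller $|M|$ and label sequence $a\cdot\theta$) yields $\trm{(\x.N)}L_0\ldots L_m\in\mathit{SN}^{a\cdot\theta}$, and Lemma~\ref{lemma:cloredbox} gives $\trm{(!a.\x.N)}L_0\ldots L_m\in\mathit{SN}^\theta$.

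The main obstacle is precisely these two permutative cases: unlike the purely internal reductions, $\plusAbs$ and $\boxAbs$ destroy the leading abstraction, so the inductive hypothesis cannot be applied to the whole reduct. Bridging this requires both a size decrease in the abstraction's body and the external closure lemmas (\ref{lemma:cloredsum} and \ref{lemma:cloredbox}) to rebuild strong normalization around the new head symbol. Everything else is routine bookkeeping on the lexicographic measure.
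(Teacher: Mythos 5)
Your proposal is correct and follows essentially the same route as the paper's proof: the same lexicographic measure $(\mathit{sn}^\theta(\trm{M[L_0/x]}L_1\ldots L_m)+\mathit{sn}^\theta(L_0),\,|M|)$, the same enumeration of one-step reducts, and the same use of Lemmas~\ref{lemma:cloredsum} and~\ref{lemma:cloredbox} to handle the $\plusAbs$ and $\boxAbs$ head permutations. Your explicit case split on whether $x$ occurs free in $M$ for reductions inside $L_0$, and your care with the label sequence $a\cdot\theta$ in the $\boxAbs$ case, only make explicit what the paper leaves implicit.
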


\newcommand{\RedSet}[1]{\mathit{Red}_{#1}}
The definition of a reducible term stays the same:
\begin{align*}
  \RedSet{\alpha}&=\{(\Gamma,\theta,M)\mid M\in\mathit{SN}^\theta\wedge\judg{\Gamma}{M}{\alpha}\};\\
  \RedSet{\tau\arrow\rho}&=\{(\Gamma,\theta,M)\mid(\judg{\Gamma}{M}{\tau\arrow\rho})\wedge(\labjudg{\theta}{M})\wedge\\
    &\qquad\forall(\Gamma\Delta,\theta,N)\in\RedSet{\tau}.(\Gamma\Delta,\theta,MN)\in\RedSet{\rho}\}.
\end{align*}
\begin{lem}\label{lemma:redprop}
  \begin{enumerate}
  \item\label{point:impl}
    If $(\Gamma,\theta,M)\in\RedSet{\tau}$, then $M\in\mathit{SN}^\theta$;
  \item\label{point:headvar}
    If $\judg{\Gamma}{x L_1\ldots L_m}{\tau}$ and $L_1,\ldots,L_m\in\mathit{SN}^\theta$,
    then $(\Gamma,\theta,x L_1\ldots L_m)\in\RedSet{\tau}$.
  \item\label{point:beta}
    If $(\Gamma,\theta,\trm{M[L_0/x]}L_1\ldots L_m)\in\RedSet{\tau}$,
    $\judg{\Gamma}{L_0}{\rho}$ and $L_0\in\mathit{SN}^\theta$
    then $(\Gamma,\theta,\trm{(\x.M)}L_0\ldots L_m)\in\RedSet{\tau}$.
  \item\label{point:sum}
    If $(\Gamma,\theta,ML_1\ldots L_m)\in\RedSet{\tau}$,
    $(\Gamma,\theta,NL_1\ldots L_m)\in\RedSet{\tau}$, and $a\in\theta$,
    then $(\Gamma,\theta,\trm{(M +a N)}L_1\ldots L_m)\in\RedSet{\tau}$.
  \item\label{point:box}
    If $(\Gamma,a\cdot\theta,ML_1\ldots L_m)\in\RedSet{\tau}$
    and $a\not\in L_i$ for every $i$,
    then $(\Gamma,\theta,\trm{(!a.M)}L_1\ldots L_m)\in\RedSet{\tau}$.
  \end{enumerate}
\end{lem}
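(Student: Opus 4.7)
The plan is to prove all five statements simultaneously by induction on the type $\tau$. The base case $\tau = \alpha$ is a direct application of the preceding technical lemmas: point~\ref{point:impl} is immediate from the definition of $\RedSet{\alpha}$; point~\ref{point:headvar} is exactly \Cref{lemma:cloredheadvar}; points~\ref{point:beta}, \ref{point:sum}, and~\ref{point:box} follow from \Cref{lemma:cloredbeta}, \Cref{lemma:cloredsum}, and \Cref{lemma:cloredbox} respectively (for point~\ref{point:beta}, the reducibility premise at base type just gives $\trm{M[L_0/x]}L_1\ldots L_m \in \mathit{SN}^\theta$, which is what the lemma requires, and the typing of the resulting term is reconstructed from the given typing of $L_0$).

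For the inductive step $\tau = \sigma \arrow \rho$, the strategy for points~\ref{point:headvar}--\ref{point:box} is uniform: pick an arbitrary $(\Gamma\Delta, \theta, P) \in \RedSet{\sigma}$, append $P$ to the argument list to obtain an expression of type $\rho$, and invoke the \ih\ at $\rho$. For point~\ref{point:impl} we instead instantiate reducibility directly: by \ih\ point~\ref{point:headvar} at $\sigma$ (with no arguments), $(\Gamma,x{:}\sigma,\theta,x)\in\RedSet{\sigma}$ for a fresh $x$; unfolding $M$'s reducibility then gives $(\Gamma,x{:}\sigma,\theta,Mx)\in\RedSet{\rho}$; and \ih\ point~\ref{point:impl} at $\rho$ yields $Mx\in\mathit{SN}^\theta$, whence $M\in\mathit{SN}^\theta$.

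Concretely: for point~\ref{point:headvar}, $P\in\mathit{SN}^\theta$ is furnished by \ih\ point~\ref{point:impl} at $\sigma$, so \ih\ point~\ref{point:headvar} at $\rho$ applied to $L_1,\ldots,L_m,P$ closes the case. For points~\ref{point:beta}, \ref{point:sum}, and~\ref{point:box}, each hypothesis of the form $(\Gamma,\theta,\cdot)\in\RedSet{\sigma\arrow\rho}$ applied to $P$ yields exactly the premise of the \ih\ at $\rho$ on the argument list $L_1,\ldots,L_m,P$; the side conditions ($L_0\in\mathit{SN}^\theta$ for point~\ref{point:beta}, $a\in\theta$ for point~\ref{point:sum}) carry over unchanged.

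The main technical wrinkle is in point~\ref{point:box}: applying \ih\ point~\ref{point:box} at $\rho$ to the list $L_1,\ldots,L_m,P$ requires $a\notin P$. This is arranged by $\alpha$-renaming $\ttrm{!a}$ so that $a$ is fresh with respect to $\theta$; since $\labjudg{\theta}{P}$, all free labels of $P$ lie in $\theta$, so $a$ cannot occur freely in $P$. Beyond this bookkeeping, the proof is the standard Tait--Girard reducibility schema, with the five stated closure properties playing the role of the usual CR1/CR2/CR3 triple.
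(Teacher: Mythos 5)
Your proof is correct and takes essentially the same route as the paper's: induction on $\tau$, with the base case discharged by Lemmas~\ref{lemma:cloredheadvar}, \ref{lemma:cloredbeta}, \ref{lemma:cloredsum} and~\ref{lemma:cloredbox}, points 2--5 pushed through the arrow case by appending an arbitrary reducible argument, and point 1 obtained by applying $M$ to a fresh variable made reducible via point 2. Your explicit handling of the freshness of $a$ in the generator case is a detail the paper leaves implicit, and you resolve it correctly.
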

\begin{proof}
  The proof is an induction on $\tau$:
  \begin{itemize}
  \item
    If $\tau$ is an atom $\alpha$, then Point \ref{point:impl} follows
    by definition, while points \ref{point:headvar} to \ref{point:box}
    come from Lemma~\ref{lemma:cloredsum}, Lemma~\ref{lemma:cloredbox},
    Lemma~\ref{lemma:cloredheadvar}, and Lemma~\ref{lemma:cloredbeta}.
  \item
    If $\tau$ is $\rho\arrow\mu$, points \ref{point:headvar} to \ref{point:box}
    come directly from the induction hypothesis, while \ref{point:impl}
    can be proved by observing that $M$ is in $\mathit{SN}^\theta$
    if $Mx$ is itself $\mathit{SN}^\theta$, where $x$ is a fresh variable.
    By induction hypothesis (on point~\ref{point:headvar}), we can
    say that $(\Gamma(x:\rho),\theta,x)\in\RedSet{\rho}$, and
    conclude that $(\Gamma(x:\rho),\theta,Mx)\in\RedSet{\mu}$.
    \qedhere
  \end{itemize}
\end{proof}

The following is the so-called Main Lemma:
\begin{prop}\label{lemma:mainlemma}
  Suppose $\judg{y_1:\tau_1,\ldots,y_n:\tau_n}{M}{\rho}$,
  $\labjudg{\theta}{M}$, and let
  $(\Gamma,\theta,N_j)\in\RedSet{\tau_j}$ for all $1\leq j\leq
  n$. Then
  $(\Gamma,\theta,M[N_1/y_1,\ldots,N_n/y_n])\in\RedSet{\rho}$.
\end{prop}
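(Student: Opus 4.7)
The plan is to proceed by induction on the typing derivation of $M$, combining the closure properties of \Cref{lemma:redprop}. Write $\sigma$ for the simultaneous substitution $[N_1/y_1, \ldots, N_n/y_n]$; the induction hypothesis will be universally quantified over $\Gamma$, $\theta$ and the choice of reducible $N_j$.

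For $M = y_i$ we have $M\sigma = N_i \in \RedSet{\tau_i}$ directly by assumption. For $M = PQ$, applying the IH to $P$ and $Q$ together with the definition of $\RedSet{\tau \arrow \rho}$ immediately yields $(\Gamma, \theta, (PQ)\sigma) \in \RedSet{\rho}$. For $M = \trm{\x.P}$ of type $\tau \arrow \rho$, I would pick an arbitrary $(\Gamma\Delta, \theta, Q) \in \RedSet{\tau}$, apply the IH to $P$ in the environment $\Gamma\Delta, x:\tau$ with substitution extended by $Q/x$ to obtain $(\Gamma\Delta, \theta, P\sigma[Q/x]) \in \RedSet{\rho}$, and then invoke \Cref{lemma:redprop}.\ref{point:beta} with $m = 0$ to conclude $(\Gamma\Delta, \theta, \trm{(\x.P\sigma)Q}) \in \RedSet{\rho}$ (the side condition $Q \in \mathit{SN}^\theta$ follows from \Cref{lemma:redprop}.\ref{point:impl}). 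The choice case $M = \trm{P +a R}$ is dispatched by \Cref{lemma:redprop}.\ref{point:sum} with $m = 0$, using that $\labjudg{\theta}{M}$ forces $a \in \theta$.

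The delicate case is the generator $M = \trm{!a.P}$, where $\labjudg{\theta}{M}$ unpacks as $\labjudg{a \cdot \theta}{P}$. The IH on $P$ must be taken at the label sequence $a \cdot \theta$, so I need the substituted terms $N_j$ to be reducible at $a \cdot \theta$ rather than at $\theta$. The natural remedy is an auxiliary \emph{label-weakening} lemma: if $(\Gamma, \theta, N) \in \RedSet{\tau}$ and $a \notin N$, then $(\Gamma, a \cdot \theta, N) \in \RedSet{\tau}$. This should go through by induction on $\tau$, since at base types appending a label that does not appear in $N$ creates no new $\rw^{a\cdot\theta}$-redexes and so preserves $\mathit{SN}$-membership, while at arrow types one unfolds the definition and recurses on argument and result types. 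By $\alpha$-renaming we may assume $a$ is fresh for every $N_j$, which simultaneously validates label-weakening and discharges the side condition $a \notin N_j$ of \Cref{lemma:redprop}.\ref{point:box}. The IH on $P$ then yields $(\Gamma, a \cdot \theta, P\sigma) \in \RedSet{\tau}$, and \Cref{lemma:redprop}.\ref{point:box} with $m = 0$ concludes. I expect the main obstacle to be establishing this label-weakening lemma rigorously at arrow types, since the parametric quantification over environment extensions in the definition of $\RedSet{\tau \arrow \rho}$ must be verified to interact cleanly with a fresh extension of $\theta$; the remaining cases follow the standard reducibility template once \Cref{lemma:redprop} is in hand.
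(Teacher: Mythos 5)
Your proof follows the paper's own argument essentially verbatim: structural induction on $M$, with the variable and application cases handled directly and the abstraction, choice, and generator cases each discharged by the corresponding closure property of Lemma~\ref{lemma:redprop} (used with $m=0$). The one place you go beyond the paper is the generator case: the paper simply applies the induction hypothesis at $a\cdot\theta$ without comment, whereas you correctly observe that this requires the $N_j$ to be reducible at $a\cdot\theta$ rather than at $\theta$, and you propose a freshness-plus-label-weakening lemma to bridge the gap --- a genuine subtlety that the paper's proof silently elides. Your instinct is right, but be aware that the weakening lemma is not a plain induction on $\tau$: the argument position of $\RedSet{\tau\arrow\rho}$ quantifies over terms reducible at the \emph{extended} sequence $a\cdot\theta$, which may themselves contain $a$, so the inductive step needs either a simultaneous strengthening statement or a reformulation of $\RedSet{}$ that is monotone in $\theta$ by construction.
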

\begin{proof}
  This is an induction on the structure of the term $M$:
  \begin{itemize}
  \item
    If $M$ is a variable, necessarily one among $y_1,\ldots,y_n$,
    then the result is trivial.
  \item
    If $M$ is an application $LP$, then there exists
    a type $\xi$ such that
    $\judg{y_1:\tau_1,\ldots,y_n:\tau_n}{L}{\xi\arrow\rho}$
    and $\judg{y_1:\tau_1,\ldots,y_n:\tau_n}{P}{\xi}$. Moreover,
    $\labjudg{\theta}{L}$ and $\labjudg{\theta}{P}$ we can
    then safely apply the induction hypothesis and conclude
    that
    $$
    (\Gamma,\theta,L[\overline{N}/\overline{y}])\in\RedSet{\xi\arrow\rho}
    \qquad
    (\Gamma,\theta,P[\overline{N}/\overline{y}])\in\RedSet{\xi}
    $$
    By definition, we get
    $$
    (\Gamma,\theta,(LP)[\overline{N}/\overline{y}])\in\RedSet{\rho}
    $$
  \item
    If $M$ is an abstraction $\trm{\x.L}$, then $\rho$
    is an arrow type $\xi\arrow\mu$ and
    $\judg{y_1:\tau_1,\ldots,y_n:\tau_n,x:\xi}{L}{\mu}$.
    Now, consider any $(\Gamma\Delta,\theta,P)\in\RedSet{\xi}$.
    Our objective is to prove with this hypothesis
    that $(\Gamma\Delta,\theta,(\x.L[\overline{N}/\overline{y}])P)\in\RedSet{\mu}$
    By induction hypothesis, since $(\Gamma\Delta,N_i)\in\RedSet{\tau_i}$,
    we get that $(\Gamma\Delta,\theta,L[\overline{N}/\overline{y},P/x])\in\RedSet{\mu}$.
    The thesis follows from Lemma~\ref{lemma:redprop}.
  \item
    If $M$ is a sum $\trm{L+a P}$, we can make use of Lemma~\ref{lemma:redprop}
    and the induction hypothesis, and conclude.
  \item
    If $M$ is a sum $\trm{!a.P}$, again, we can make use of Lemma~\ref{lemma:redprop}
    and the induction hypothesis. We should however observe, that $\labjudg{a\cdot\theta}{P}$,
    since $\labjudg{\theta}{M}$.
    \qedhere
  \end{itemize}
\end{proof}
We now have all the 
ingredients for our proof of strong normalization:
\begin{theorem}
  If $\judg{\Gamma}{M}{\tau}$ and $\labjudg{\theta}{M}$, then $M\in\mathit{SN}^\theta$.
\end{theorem}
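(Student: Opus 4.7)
The plan is to derive this theorem as a direct corollary of the Main Lemma (Proposition~\ref{lemma:mainlemma}), by instantiating the reducible substitution with the identity. Concretely, writing $\Gamma = y_1:\tau_1,\ldots,y_n:\tau_n$, I would choose $N_j = y_j$ for each $j$, so that the substitution $M[N_1/y_1,\ldots,N_n/y_n]$ is simply $M$.

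The only preliminary obligation is to check that each variable $y_j$ is itself reducible, i.e.\ $(\Gamma,\theta,y_j)\in\RedSet{\tau_j}$. This is precisely the $m=0$ case of Lemma~\ref{lemma:redprop}.\ref{point:headvar}: the premise $L_1,\ldots,L_m\in\mathit{SN}^\theta$ is vacuously satisfied, while $\judg{\Gamma}{y_j}{\tau_j}$ holds by the variable rule. Note that we also have $\labjudg{\theta}{y_j}$ trivially from the label rule for variables.

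With these ingredients, Proposition~\ref{lemma:mainlemma} applied to $\judg{\Gamma}{M}{\tau}$, $\labjudg{\theta}{M}$, and the family $\{(\Gamma,\theta,y_j)\in\RedSet{\tau_j}\}_{j=1}^n$ yields $(\Gamma,\theta,M)\in\RedSet{\tau}$. Invoking Lemma~\ref{lemma:redprop}.\ref{point:impl} then gives $M\in\mathit{SN}^\theta$, as required.

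The theorem itself therefore presents no real obstacle: all the heavy lifting has been absorbed into Proposition~\ref{lemma:mainlemma} and the closure lemmata (Lemmata~\ref{lemma:cloredsum}--\ref{lemma:cloredbeta}), and the final step is the standard reducibility-candidates idiom of specialising to the identity substitution. I would present it in two or three lines.
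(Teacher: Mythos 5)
Your proposal is correct and coincides with the paper's own argument: both instantiate the Main Lemma (Proposition~\ref{lemma:mainlemma}) with the identity substitution, justify reducibility of the variables via Lemma~\ref{lemma:redprop}.\ref{point:headvar} with $m=0$, and conclude with Lemma~\ref{lemma:redprop}.\ref{point:impl}. Your write-up is in fact slightly more explicit than the paper's about why each $y_j$ is reducible.
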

\begin{proof}
  Suppose that $\judg{x_1:\rho_1,\ldots,x_n:\rho_n}{M}{\tau}$.
  Since $\judg{x_1:\rho_1,\ldots,x_n:\rho_n}{x_i}{\rho_i}$ for all
  $i$, and clearly $\labjudg{\theta}{x_i}$ for every $i$, we
  can apply Lemma \ref{lemma:mainlemma} and
  obtain that $(\Gamma,\theta,M[\overline{y}/\overline{y}])\in\RedSet{\tau}$
  from which, via Lemma~\ref{lemma:redprop}, one gets the thesis.
\end{proof}


\section{Projective reduction}
\label{sec:projective-reduction}

Permutative reduction $\rw_\perm$ evaluates probabilistic sums purely by rewriting. Here we look at an alternative \emph{projective} notion of reduction, which conforms more closely to the intuition that $\ttrm{!a}$ generates a probabilistic event to determine the choice $\ttrm{+a}$. We would like to evaluate $\ttrm{!a.N}$ by the probabilistic sum $N_0+N_1$, where $N_i$ is $N$ with any subterm of the form $\ttrm{M_0+aM_1}$ projected to $M_i$. The meta-level sum $+$ must be distinct from the abbreviation $N\+M=\ttrm{!a.N+aM}$, since otherwise reduction on this term would be circular. We then need a call-by-need strategy for evaluating $\ttrm{!a.N}$, which we implement through a \emph{head context}.

\begin{defn}
The \emph{$a$-projections} $\proj a0N$ and $\proj a1N$ are defined as follows, where $a\neq b$.
\begin{align*}
	\proj a0{N+aM} &= \proj a0N		&	\proj ai{\x.N} &= \x.\proj aiN
\\	\proj a1{N+aM} &= \proj a1M		&	\proj ai{NM}   &= (\proj aiN)(\proj aiM)
\\	\proj ai{!a.N} &= \trm{!a.N}	& 	\proj ai{N+bM} &= (\proj aiN)\trm{+b}(\proj aiM)
\\	\proj aix      &= x				&	\proj ai{!b.N} &= \trm{!b.}\proj aiN
\end{align*}
\end{defn}

\begin{defn}
A \emph{head context} $H[\,]$ is given by the following grammar.
\[
	H[\,] \coloneqq [\,] ~\mid~ \lambda x.H[\,] ~\mid~ H[\,]N
\]
\end{defn}

\begin{defn}
\emph{Projective} reduction $\rw_\pi$ is the following reduction step.
\[
	H[\,\trm{!a.N}] ~\rw_\pi~ H[\proj a0N] + H[\proj a1N]
\]
\end{defn}

Projective reduction is a strategy, and not a rewrite relation, as it applies only in head contexts---that is, we do not evaluate inside arguments or under other generators $\ttrm{!a}$ or choices $\ttrm{+a}$. This corresponds to the generator permuting outside of a head context by $\boxAbs$ and $\boxFun$ steps, as below.
\[
	\ttrm{H[\,!a.N]}\rws_\perm \ttrm{!a.H[N]}
\]
Evaluating inside $\ttrm{!a}$ would correspond to permuting two generators, $\ttrm{!a.!b.N}$ to $\ttrm{!b.!a.N}$, which we do not admit as a rewrite rule. Similarly, we do not evaluate inside a choice, since we distribute a generator over a choice (rule $\plusBox$) rather than permuting in the other direction. Still, since projective reduction evaluates the outermost generator on the spine of a term, and removes all its bound labels, projective reduction has the same normal forms except in argument position. Rather than trying to bring both forms closer together, which would involve juggling $\+$ and $+$ and several kinds of reduction context, we contend ourselves with showing that permutative reduction can implement a projective step. For the following proposition, observe that if $a$ is not free in $N$, then $\proj aiN=N$, so that we have $H[\,\trm{!a.N}] ~\rw_\pi~ H[N]+H[N]~=~H[N]$. 



\begin{prop}[Permutative reduction simulates projective reduction]
\[
	H[\,\trm{!a.N}] ~\rws_\perm~
	\left\{\begin{array}{l@{\qquad}l}
		H[N]						 & (a\notin\fl N) \\[5pt]
		H[\proj a0N] \+ H[\proj a1N] & (a\in\fl N)
	\end{array}\right.
\]
\end{prop}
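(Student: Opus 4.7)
The plan is to split the reduction into two phases: first move the generator $\ttrm{!a}$ out past the head context $H$, then (in the nontrivial case) expand the resulting $\ttrm{!a.H[N]}$ into a sum via permutation of $\ttrm{+a}$. Depending on whether $a$ is free in $N$, we either erase the generator via $\boxVoid$, or push all inner $\ttrm{+a}$'s to the top.

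\medskip
\noindent\textbf{Phase 1: pulling $\ttrm{!a}$ out of $H$.} By induction on $H$ I would show $\trm{H[!a.N]} \rws_\perm \trm{!a.H[N]}$. The base case $H=[\,]$ is immediate. For $H=\lambda x.H'$ use $\boxAbs$ after the inductive step; for $H=H'\,M$ use $\boxFun$. Since $H$ contains neither generators nor choices, it does not bind any label, so these permutations are unconditional.

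\medskip
\noindent\textbf{Case $a\notin\fl N$.} Since $H$ contributes no labels, $a\notin\fl{H[N]}$, so a single $\boxVoid$ step yields $\trm{!a.H[N]}\rw_\perm \trm{H[N]}$, which closes this case.

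\medskip
\noindent\textbf{Case $a\in\fl N$.} The target $\trm{H[\proj a0 N] \+ H[\proj a1 N]}$ is, up to $\alpha$-renaming of the fresh label of $\+$, the term $\trm{!a.(H[\proj a0 N] +a H[\proj a1 N])}$. Since $H$ contains no labels, an easy induction gives $\proj ai{H[N]} = H[\proj ai N]$, so it suffices to prove the following lemma and wrap it with $\ttrm{!a}$:
\begin{quote}
\emph{Expansion lemma:} If $\fl M\subseteq\{a\}$ and $a\in\fl M$, then $M \rws_\perm \trm{(\proj a0 M) +a (\proj a1 M)}$.
\end{quote}
This I would prove by induction on $M$. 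For $M=\trm{P+aQ}$, use the IH on $P$ and $Q$ whenever $a$ is free there (the projections are identities otherwise), then simplify the nested $\ttrm{+a}$ with $\cancelL$ and $\cancelR$. For $M=\lambda x.P$, use $\plusAbs$ after the IH; for $M=PQ$, use $\plusFun$ and/or $\plusArg$, followed by $\cancelL,\cancelR$ to delete the duplicated inner $\ttrm{+a}$; for $M=\trm{!b.P}$ (with $b\neq a$ after $\alpha$-renaming), use $\plusBox$; for $M=\trm{P+bQ}$ with $b\neq a$, use $\plusL$ and $\plusR$ followed by $\cancelL,\cancelR$.

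\medskip
The main obstacle is the last case, where the side conditions $a\smallbin{<}b$ of $\plusL$ and $\plusR$ must be verified. This is where the assumption that $\ttrm{!a}$ has already been lifted to the outermost position is essential: in the ambient term $\ttrm{!a.H[N]}$, any label $b$ occurring with a $\ttrm{!b}$ inside $M=\trm{H[N]}$ satisfies $a<b$ in the order of Definition~\ref{def:orderlabels}, so $\plusL$ and $\plusR$ are applicable in exactly the direction we need. Phase 1 is therefore not merely a convenience but also the step that guarantees the side conditions throughout the inductive expansion.
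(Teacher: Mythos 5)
Your proposal is correct and follows essentially the same route as the paper: lift $\ttrm{!a}$ out of $H$ by $\boxAbs$ and $\boxFun$ steps, use $H[\proj a i N]=\proj a i{H[N]}$ since $H$ carries no labels, and conclude by an induction establishing $M\rws_\perm\proj a0M~\ttrm{+a}~\proj a1M$, with the side conditions of $\plusL$/$\plusR$ discharged because $a$ is the smallest label. The only quibble is that the hypothesis $\fl M\subseteq\{a\}$ of your expansion lemma is not preserved when the induction descends under a binder $\ttrm{!b}$ (where $b$ becomes free in the subterm); it should instead read ``$a$ is minimal in the label order of the ambient well-labeled representative,'' which is precisely the invariant you identify in your closing paragraph.
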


\begin{proof}
The case $a\notin\fl N$ is immediate by a $\boxVoid$ step. For the other case, observe that $\trm{H[\,!a.N]}\rws_\perm\trm{!a.H[N]}$ by $\boxAbs$ and $\boxFun$ steps, and since $a$ does not occur in $H[\,]$, we have $H[\proj aiN]=\proj ai{H[N]}$. Unfolding also the definition of $\+$, we may thus restrict ourselves to proving the following.
\[
	\trm{!a.N} ~\rws_\perm~ \trm{!a.} \proj a0N~\trm{+a}~\proj a1N
\]
Given that $a$ is the smallest label in $N$, \ie\ $a<b$ for any other label $b$ in $N$, a straightforward induction on $N$ gives $N\rws_\perm\proj a0N~\ttrm{+a}~\proj a1N$, with as special base case $N\rws_\perm N$ for $a\notin\fl N$.
\end{proof}

\newcommand\val{\mathsf{v}}
\newcommand\plusval{\mathbin{\smallbin\oplus_\val}}

\section{Call-by-value translation}
\label{sec:cbv}

We consider the interpretation of a call-by-value probabilistic lambda-calculus. For simplicity we will only restrict probabilistic reduction to a call-by-value regime, and not $\beta$-reduction; our values $V$ are then just deterministic (the idea is that probabilistic sums cannot be duplicated or erased).
\[
\begin{array}{l@{\qquad}rcl}
	N \coloneqq x \mid \x.N \mid MN \mid M\plusval N  & (\x.N)V &\rw_\val& N[V/x]
\\[5pt]
	V \coloneqq x \mid \x.V \mid VW  &  M\plusval N &\rw_\val& M + N
\end{array}
\]
The call-by-value interpretation $\uncbv{N}$ of term $N$ in this calculus is given as follows. First, we translate $N$ to an open $\PEL$-term $\unopen N=\labjudg\theta P$, and then $\uncbv{N}$ is the label closure $\uncbv{N}=\labclose\theta P$, which prefixes $P$ with a generator $\trm{!a}$ for every $a$ in $\theta$.

\begin{defn}
The \emph{label closure} $\labclose\theta P$ is given inductively as follows.
\[
	\labclose{}P = P \qquad \labclose{a\cdot\theta}P = \labclose\theta{\trm{!a.P}}
\]
The \emph{open interpretation} $\unopen N$ is given as follows, where $\unopen{N_i}=\labjudg{\theta_i}{P_i}$ for $i\in\{1,2\}$, and all labels are fresh.
\[
\begin{array}{r@{\quad}c@{\quad}l@{\qquad}r@{\quad}c@{\quad}l}
		\unopen x 		 &=& \labjudg{}x				& \unopen {N_1N_2}   &=& \labjudg{\theta_2\cdot\theta_1}{P_1P_2}
\\[5pt]	\unopen {\x.N_1} &=& \labjudg{\theta_1}{\x.P_1} & \unopen {N_1\plusval N_2} &=& \labjudg{\theta_2\cdot\theta_1\cdot a}{\trm{P_1 +a P_2}}
\end{array}
\]
The \emph{call-by-value interpretation} of a probabilistic lambda-term $N$ is
\[
	\uncbv N=\lfloor\unopen N\rfloor~.
\]
\end{defn}

The choice of ordering in building up $\theta$ in an open interpretation will determine the order of the labels in a term, and thus which label will rise to the surface. If we want to simulate a call-by-value step $M\plusval N\rw_\val M+N$ with $\rws_\perm$, we need to permute $\theta$ so that the label $a$ assigned to the translation of the $\plusval$ will be the smallest.

\begin{prop}
If $N\rw_\val M+P$ and $\unopen N=\labjudg\theta{N'}$ then there is a permutation $\theta'\cdot a$ of $\theta$ that gives the following.
\[
	\labclose{\theta'\cdot a}{N'}~\rws_\perm~\trm{!a.}\uncbv M\trm{+a}\uncbv P
\]
\end{prop}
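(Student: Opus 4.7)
Only the probabilistic rule $L \plusval R \rw_\val L+R$ produces a sum on the right of $\rw_\val$ (the $\beta_\val$-rule stays within the term grammar), so the step $N \rw_\val M+P$ must arise from a $\plusval$-redex at some position in $N$. I will first treat the base case where the redex is at the top, \ie\ $N = \trm{M \plusval P}$, since this is where all the essential rewriting happens; the general case follows by combining it with commutations to lift the relevant $\trm{+a}$ to the top.

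In the base case, unfolding $\unopen{\cdot}$ gives $\unopen N = \labjudg{\theta_P \cdot \theta_M \cdot a}{\trm{M' +a P'}}$ with $\unopen M = \labjudg{\theta_M}{M'}$ and $\unopen P = \labjudg{\theta_P}{P'}$, so the natural choice is the identity permutation, $\theta' \coloneq \theta_P \cdot \theta_M$, giving $\theta' \cdot a = \theta$. By definition of label closure, $\labclose{\theta' \cdot a}{\trm{M' +a P'}} = \trm{!a.}\labclose{\theta'}{\trm{M' +a P'}}$, so the task reduces to showing $\labclose{\theta'}{\trm{M' +a P'}} \rws_\perm \uncbv M \trm{+a} \uncbv P$ inside the outer $\trm{!a}$.

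The core rewriting is then uniform: distribute each generator $\trm{!b}$ of $\theta'$ over the $\trm{+a}$ using $\plusBox$ (applicable because $a$ is fresh, hence different from every $b \in \theta'$), and then erase each $\trm{!b}$ on the side where $b$ does not occur free using $\boxVoid$. By the freshness clause in $\unopen{\cdot}$, each $b \in \theta_M$ is free only in $M'$ and each $b \in \theta_P$ only in $P'$, so the $\theta_M$-generators accumulate on the left of $\trm{+a}$ and the $\theta_P$-generators on the right, yielding $\labclose{\theta_M}{M'} \trm{+a} \labclose{\theta_P}{P'} = \uncbv M \trm{+a} \uncbv P$, as required.

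The main obstacle is the general case, where the reduced $\plusval$ occurs under some surrounding constructors. There one would permute $\theta$ non-trivially to bring the redex's label $a$ to the end, then lift its $\trm{+a}$ up to the top of $N'$ using the commutations $\plusAbs$, $\plusFun$, $\plusArg$, $\plusL$, $\plusR$ of Figure~\ref{fig:reduction rules} (a small instance of which appears in Figure~\ref{fig:example reduction}), and finally apply the $\plusBox$/$\boxVoid$ routine above. The subtle bookkeeping is that labels of subterms duplicated by lifting $\trm{+a}$ past a constructor end up free on both sides of the new $\trm{+a}$, so $\boxVoid$ does not erase them---which is precisely what is needed to match the context duplication performed by a source-level step $C[L \plusval R] \rw_\val C[L] + C[R]$.
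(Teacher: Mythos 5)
Your plan is correct and is essentially the paper's own proof: the paper likewise reduces to a generalized statement proved by induction on the context $C$ with $N=C[Q\plusval R]$, lifting the choice $\ttrm{+a}$ to the top of the term under the label closure via the permutation rules and then distributing the outer generators over it with $\plusBox$ and erasing the one-sided ones with $\boxVoid$. Your base-case computation and your remark that duplicated context labels survive on both sides of the new $\ttrm{+a}$ are exactly the points the paper leaves implicit, so no gap.
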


\begin{proof}
	The result follows from the following, more general, statement:
	if $N \rw_\val M +P$ and $\unopen N=\labjudg\theta{N'}$ then there is a permutation  $\theta'' \cdot \theta' \cdot a$ of $\theta$ such that  
	$\labclose{\theta'' \cdot \theta' \cdot a}{N'} \rws_\perm \labclose{\theta'' \cdot \theta' \cdot a}{Q \trm{+a} R}$ where $\unopen{M} =  \labjudg{\theta' }{Q}$ and  $\unopen{P} = \labjudg{\theta''}{R}$.
	The proof of this statement is by induction on the structure of the context $C$ such that $N=C[Q\oplus_\val R]$ and $M=C[Q]$ and $P=C[R]$.
\end{proof}

\bibliographystyle{splncs04}
\bibliography{biblio}

\begin{thebibliography}{10}
\providecommand{\url}[1]{\texttt{#1}}
\providecommand{\urlprefix}{URL }
\providecommand{\doi}[1]{https://doi.org/#1}

\bibitem{AvanziniDalLagoGhyselen19}
Avanzini, M., Dal~Lago, U., Ghyselen, A.: Type-based complexity analysis of
  probabilistic functional programs. In: Proc. of {LICS}. pp. 1--13 (2019)

\bibitem{BreuvartDalLago18}
Breuvart, F., Dal~Lago, U.: On intersection types and probabilistic lambda
  calculi. In: Proc. of {PPDP}. pp. 8:1--8:13 (2018)

\bibitem{DLFVY17}
Dal~Lago, U., Faggian, C., Valiron, B., Yoshimizu, A.: The geometry of
  parallelism: classical, probabilistic, and quantum effects. In: Proc. of
  {POPL}. pp. 833--845 (2017)

\bibitem{DalLagoGrellois19}
Dal~Lago, U., Grellois, C.: Probabilistic termination by monadic affine sized
  typing. {ACM} Trans. Program. Lang. Syst.  \textbf{41}(2),  10:1--10:65
  (2019)

\bibitem{DalLagoSangiorgiAlberti14}
{Dal Lago}, U., Sangiorgi, D., Alberti, M.: On coinductive equivalences for
  higher-order probabilistic functional programs. In: Proc. of {POPL}. pp.
  297--308 (2014)

\bibitem{DalLagoZorzi12}
Dal~Lago, U., Zorzi, M.: Probabilistic operational semantics for the lambda
  calculus. {RAIRO} - Theor. Inf. and Applic.  \textbf{46}(3),  413--450 (2012)

\bibitem{DanosEhrhard11}
Danos, V., Ehrhard, T.: Probabilistic coherence spaces as a model of
  higher-order probabilistic computation. Inf. Comput.  \textbf{209}(6),
  966--991 (2011)

\bibitem{deLiguoroPiperno95}
de'Liguoro, U., Piperno, A.: Non deterministic extensions of untyped
  lambda-calculus. Inf. Comput.  \textbf{122}(2),  149--177 (1995)

\bibitem{Dershowitz82}
Dershowitz, N.: Orderings for term-rewriting systems. Theor. Comput. Sci.
  \textbf{17},  279--301 (1982)

\bibitem{EPT18}
Ehrhard, T., Pagani, M., Tasson, C.: Full abstraction for probabilistic {PCF}.
  J. {ACM}  \textbf{65}(4),  23:1--23:44 (2018)

\bibitem{EhrhardTasson19}
Ehrhard, T., Tasson, C.: Probabilistic call by push value. Logical Methods in
  Computer Science  \textbf{15}(1) (2019)

\bibitem{FaggianRonchi19}
Faggian, C., Rocca, S.R.D.: Lambda calculus and probabilistic computation. In:
  Proc. of {LICS}. pp. 1--13. {IEEE} Computer Society (2019)

\bibitem{GoubaultLarrecq19}
Goubault{-}Larrecq, J.: A probabilistic and non-deterministic
  call-by-push-value language. In: Proc. of {LICS}. pp. 1--13. {IEEE} Computer
  Society (2019)

\bibitem{JonesPlotkin89}
Jones, C., Plotkin, G.D.: A probabilistic powerdomain of evaluations. In: Proc.
  of {LICS}. pp. 186--195. {IEEE} Computer Society (1989)

\bibitem{JungTix98}
Jung, A., Tix, R.: The troublesome probabilistic powerdomain. Electr. Notes
  Theor. Comput. Sci.  \textbf{13},  70--91 (1998)

\bibitem{Leventis18}
Leventis, T.: Probabilistic b{\"{o}}hm trees and probabilistic separation. In:
  Proc. of {LICS}. pp. 649--658. {IEEE} Computer Society (2018)

\bibitem{Manber-Tompa-1982}
Manber, U., Tompa, M.: Probabilistic, nondeterministic, and alternating
  decision trees. In: 14th Annual ACM Symposium on Theory of Computing. pp.
  234--244 (1982)

\bibitem{SahebDjahromi78}
Saheb{-}Djahromi, N.: Probabilistic {LCF}. In: Proc. of MSCS. LNCS, vol.~64,
  pp. 442--451. Springer (1978)

\bibitem{SangiorgiVignudelli16}
Sangiorgi, D., Vignudelli, V.: Environmental bisimulations for probabilistic
  higher-order languages. In: Proc. of {POPL}. pp. 595--607 (2016)

\bibitem{Takahashi95}
Takahashi, M.: Parallel reductions in lambda-calculus. Inf. Comput.
  \textbf{118}(1),  120--127 (1995)

\bibitem{Zantema-Pol-2001}
Zantema, H., van~de Pol, J.: A rewriting approach to binary decision diagrams.
  The Journal of Logic and Algebraic Programming  \textbf{49},  61--86 (2001)

\end{thebibliography}
\addcontentsline{toc}{section}{References}

\newpage
\appendix

\section{Technical appendix: omitted proofs}
\label{sect:proofs}

The enumeration of lemmas already stated in the body of the article is unchanged.

\setcounter{lemmaAppendix}{\value{lem:confluence-perm}}
\begin{lemmaAppendix}[Confluence of $\rw_\perm$]
	\label{lemmaAppendix:confluence-perm}
	Reduction $\rw_\perm$ is confluent.
\end{lemmaAppendix}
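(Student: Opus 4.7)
The plan is to complete the local confluence argument from the body of the paper in the same style: enumerate the remaining overlapping pairs of $\rw_\perm$-redexes and close each by an explicit joining diagram in $\rws_\perm$. Once local confluence is established, confluence of $\rw_\perm$ follows by Newman's lemma together with strong normalization (\Cref{lemma:strong-normalization}), exactly as in the body.

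The body covers the critical pairs among $\idem$, $\cancelL/\cancelR$, $\plusX$ (the unified form of $\plusAbs$, $\plusFun$, $\plusArg$, $\plusL$, $\plusR$, $\plusBox$), and the self-overlap of $\plusL/\plusR$. What remains is to pair each of these rules, together with the three generator-permutation rules $\boxVoid$, $\boxAbs$, $\boxFun$, against every rule appearing lower in Figure~\ref{fig:reduction rules}, and to handle the self-overlaps among the generator rules. The overlaps of $\idem$, $\cancelL$, $\cancelR$ with the generator rules are essentially immediate: permuting a binder above or beside a collapsing choice on label $a$ does not interfere with the collapse itself, and the two paths join in at most one step on each side.

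The bulk of the remaining work is the overlaps of $\plusX$ with each of $\boxVoid$, $\boxAbs$, $\boxFun$. A representative case is $\plusBox$ against $\boxFun$ on $\trm{(!b.(N+aM))P}$ with $a\neq b$: one path applies $\plusBox$ first to reach $\trm{((!b.N)+a(!b.M))P}$ and then $\plusFun$ and $\boxFun$, while the other applies $\boxFun$ first to reach $\trm{!b.((N+aM)P)}$ and then $\plusFun$ and $\plusBox$; both converge on $\trm{(!b.NP)+a(!b.MP)}$. The cases of $\plusX$ against $\boxVoid$ and $\boxAbs$ follow the same template, with an additional $\boxVoid$ step appended whenever a generator becomes vacuous after the permutation. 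The remaining pairwise overlaps among $\boxVoid$, $\boxAbs$, $\boxFun$, and their self-overlaps (for instance nested $\boxFun$ on $\trm{(!a.!b.N)M}$) are small diamonds that close in at most two $\rws_\perm$ steps per side.

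The hard part will be notational rather than conceptual: when a $\plusX$-overlap sits inside an arbitrary context $C[\,]$, that context may itself bind or mention the labels being permuted by the other redex, so the side-conditions (namely $a\notin N$ for $\boxVoid$, $a\neq b$ for $\plusBox$, and $a<b$ for $\plusL/\plusR$) need to be re-checked on the closing paths. I would manage this uniformly by renaming bound labels, using the convention that terms are considered up to $\alpha$-equivalence of labels, so that all side-conditions hold simultaneously on both sides of every diagram. With this bookkeeping in place, every remaining diagram closes mechanically and local confluence is established.
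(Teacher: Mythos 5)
Your overall strategy---finish the local-confluence case analysis and conclude by Newman's lemma together with strong normalization (\Cref{lemma:strong-normalization})---is exactly the paper's, and your treatment of the generator rules is sound: the $\plusBox$/$\boxVoid$ peak and the $\plusBox$/$\boxFun$ peak on $\trm{(!b.(N+aM))P}$ both close as you describe, and the first of these is indeed one of the two cases the appendix supplies. But there is a genuine gap in your enumeration. You assert that the body already covers all critical pairs ``among $\plusX$'' and on that basis confine the remaining work to overlaps with $\boxVoid$, $\boxAbs$, $\boxFun$. This misses the root overlap of $\plusFun$ with $\plusArg$ on a term $\trm{(N+aM)(P+bQ)}$, where the two rules contract the \emph{same} application node. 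That peak is not subsumed by the body's generic $\plusX$ diagram, which only treats a $\plusX$ redex against a redex occurring \emph{inside} the subterm being permuted; here the two redexes share their root, so it is a critical pair in its own right---and it is precisely the first case the paper's appendix proof exists to supply.

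It is also the least routine diagram in the whole analysis, which undercuts your claim that everything remaining ``closes mechanically'' in one or two steps per side. For $a=b$ the two sides reach $\trm{((NP)+a(NQ))+a((MP)+a(MQ))}$ and $\trm{((NP)+a(MP))+a((NQ)+a(MQ))}$ respectively, and joining them at the common reduct $\trm{(NP)+a(MQ)}$ requires the cancellation rules $\cancelL,\cancelR$; for $a\neq b$ one additionally needs $\plusL,\plusR$ before cancelling. The peak does close, so no step of your plan fails outright, but a proof that omits this case is incomplete. Add it (its symmetric variant follows as the paper notes), and your argument matches the paper's.
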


\begin{proof}
%
\newcommand\itm[2]{\medskip\noindent(#1)}
	We consider only the cases omitted in the proof on p.~\pageref{lem:confluence-perm}.
	
	\itm\plusFun{\trm{(N+aM)P}\rw\trm{(NP)+a(MP)}}
	{\small
		\[
		\vcenter{\hbox{\begin{tikzpicture}
				\matrix [matrix of math nodes] (m) {
					\trm{(N+aM)(P+aQ)}                &[10pt] \trm{(N(P+aQ))+a(M(P+aQ))}
					\\[20pt]\trm{((N+aM)P)+a((N+aM)Q)}        &       \trm{((NP)+a(NQ))+a((MP)+a(MQ))}
					\\[20pt]\trm{((NP)+a(MP))+a((NQ)+a(MQ))}  &       \trm{(NP)+a(MQ)}
					\\ };
				\draw[rw] (m-1-1) --node[above]{$\plusFun$} (m-1-2);
				\draw[rw] (m-1-1) --node[left] {$\plusArg$} (m-2-1);
				\draw[rws,implied] (m-1-2) --node[right]{$\plusArg$} (m-2-2);
				\draw[rws,implied] (m-2-1) --node[left] {$\plusFun$} (m-3-1);
				\draw[rws,implied] (m-2-2) --node[right]{$\cancelL,\cancelR$} (m-3-2);
				\draw[rws,implied] (m-3-1) --node[below]{$\cancelL,\cancelR$} (m-3-2);
				\end{tikzpicture}}}
		\]
		\[
		\vcenter{\hbox{\begin{tikzpicture}
				\matrix [matrix of math nodes] (m) {
					\trm{(N+bM)(P+aQ)}         &[20pt] \trm{(N(P+aQ))+b(M(P+aQ))}
					\\[20pt]                           &       \trm{((NP)+a(NQ))+b((MP)+a(MQ))} 
					\\[20pt]\trm{((N+bM)P)+a((N+bM)Q)} &       \trm{((NP)+b(MP))+a((NQ)+b(MQ))}
					\\ };
				\draw[rw] (m-1-1) --node[above]{$\plusFun$} (m-1-2);
				\draw[rw] (m-1-1) --node[left] {$\plusArg$} (m-3-1);
				\draw[rws,implied] (m-1-2) --node[right]{$\plusArg$} (m-2-2);
				\draw[rws,implied] (m-2-2) --node[right]{$\plusL,\plusR,\cancelL,\cancelR$} (m-3-2);
				\draw[rws,implied] (m-3-1) --node[below]{$\plusFun$} (m-3-2);
				\end{tikzpicture}}}
		\]
	}
	
	\itm\plusBox{\trm{!b.(N +a M)} \rw \trm{(!b.N) +a (!b.M)}}
	{\small
		\[
		\vcenter{\hbox{\begin{tikzpicture}
				\matrix [matrix of math nodes] (m) {
					\trm{!b.(N+aM)}    &[20pt] \trm{(!b.N)+a(!b.M)}
					\\[20pt]\trm{N+aM}
					\\ };
				\draw[rw] (m-1-1) --node[above]{$\plusBox$} (m-1-2);
				\draw[rw] (m-1-1) --node[left] {$\boxVoid$} (m-2-1);
				\draw[rws,implied] (m-1-2) --node[below right=-2pt] {$\boxVoid$} (m-2-1);
				\end{tikzpicture}}}
		\quad (a \neq b, \ b \notin\trm{N+a M})
		\]
	}
\end{proof}

\setcounter{lemmaAppendix}{\value{lemma:application-parallel-beta}}
\begin{lemmaAppendix}
	\label{lemmaAppendix:application-parallel-beta}
	If $\trm{M} \rwp_\beta \trm{M'}$ and $\trm{N} \rwp_\beta \trm{N'}$, 
	then $\trm{MN} \rwp_\beta \trm{M'N'}$.
	If moreover $M = \lambda x.R$ and $M'^ = \lambda x R'$ with $R \rwp_\beta R'$, then  $\trm{MN} \rwp_\beta \trm{R'[N'/x]}$.
\end{lemmaAppendix}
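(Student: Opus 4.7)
The plan is to proceed directly from the definition of parallel $\beta$-reduction, which says that $M \rwp_\beta M'$ iff there is a labeling $\trm{M*}$ of $M$ with $\trm{<M*>} = M'$, and similarly $\trm{<N*>} = N'$ for a labeling $\trm{N*}$ of $N$. The argument is simply a matter of choosing the right labeling of $MN$.

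For the first claim, I would form the labeling $\trm{M*N*}$ of $\trm{MN}$ obtained by juxtaposing $\trm{M*}$ and $\trm{N*}$ without marking the outermost application as a labeled redex. By the application clause in the definition of labeled reduct, $\trm{<M*N*>} = \trm{<M*><N*>} = \trm{M'N'}$, which witnesses $\trm{MN} \rwp_\beta \trm{M'N'}$.

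For the second claim, assume further that $M = \lambda x.R$ and $M' = \lambda x.R'$ with $R \rwp_\beta R'$, so there is a labeling $\trm{R*}$ of $R$ with $\trm{<R*>} = R'$. Then the term $\trm{(\x.R)N}$ has a genuine $\beta$-redex at the outermost application, so the expression $\trm{(\x.R*)*N*}$ is a valid labeling of $\trm{MN}$: it differs from $\trm{M*N*}$ above only in that the outermost redex is now marked. By the first clause of the definition of labeled reduct, $\trm{<(\x.R*)*N*>} = \trm{<R*>[<N*>/x]} = \trm{R'[N'/x]}$, yielding the desired parallel step $\trm{MN} \rwp_\beta \trm{R'[N'/x]}$.

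There is no real obstacle; the only thing to check is that each constructed labeling is well-formed, which amounts to observing that in the second case the outermost application really is a $\beta$-redex because $M = \lambda x.R$. No induction is required, since the statement is closed under the two clauses of the labeled-reduct definition that handle applications.
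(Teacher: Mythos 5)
Your proposal is correct and follows essentially the same route as the paper's own proof: both cases are handled by directly constructing the appropriate labeling of $\trm{MN}$ (juxtaposing $\trm{M*}$ and $\trm{N*}$ for the first claim, and additionally marking the outermost redex as $\trm{(\x.R*)*N*}$ for the second) and then reading off the labeled reduct. No gaps; the observation that no induction is needed matches the paper.
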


\begin{proof}
	Since $\trm{M} \rwp_\beta \trm{<M*>} = M'$ and $\trm{N} \rwp_\beta \trm{<N*>} = N'$ for some labelings $\trm{M*}$ and $\trm{N*}$ of $M$ and $N$ respectively, let $\trm{(MN)*}$ be the labeling of $\trm{MN}$ obtained by putting together the labelings $\trm{M*}$ and $\trm{N*}$.
	Clearly, $\trm{MN} \rwp_\beta \trm{<(MN)*>} = \trm{<M*><N*>} = M'N'$.
	
	Concerning the part of the statement after ``moreover'', since $\trm{R} \rwp_\beta \trm{<R*>} = R'$ and $\trm{N} \rwp_\beta \trm{<N*>} = N'$ for some labelings $\trm{R*}$ and $\trm{N*}$ of $R$ and $N$ respectively, let $\trm{(MN)*}$ be the labeling of $\trm{MN}$ obtained by putting together the labelings $\trm{R*}$ and $\trm{N*}$ plus the labeled $\beta$-redex $\trm{(\x.R)*N}$.
	Clearly, $\trm{MN} \rwp_\beta \trm{<(MN)*>} = \trm{<R*>[<N*>/x]} = \trm{R'[N'/x]}$.
\end{proof}

\setcounter{lemmaAppendix}{\value{lemma:cloredbox}}
\begin{lemmaAppendix}\label{lemmaAppendix:cloredbox}
	The following rule is sound:
	$$
	\infer{\trm{(!a.M)}L_1\ldots L_m\in\mathit{SN}^\theta}{\trm{M}L_1\ldots L_m\in\mathit{SN}^{a\cdot\theta} & \forall i.a\not\in L_i}
	$$
\end{lemmaAppendix}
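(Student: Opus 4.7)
We proceed by well-founded induction on the lexicographic triple $(\mathit{sn}^{a\cdot\theta}(ML_1\ldots L_m),\, m,\, |M|)$, which is well-defined because the first component is finite by the hypothesis $ML_1\ldots L_m \in \mathit{SN}^{a\cdot\theta}$. It suffices to show that every one-step $\rw^\theta$-reduct of $(!a.M)L_1\ldots L_m$ lies in $\mathit{SN}^\theta$; we case-split on where this step occurs.

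Four of the cases are routine. A step \emph{strictly inside $M$} is necessarily an $\rw^{a\cdot\theta}$-step $M \rw^{a\cdot\theta} M'$, yielding $(!a.M')L_1\ldots L_m$; since $ML_1\ldots L_m \rw^{a\cdot\theta} M'L_1\ldots L_m$, the first measure component strictly drops and the IH applies. A step \emph{inside some $L_i$} coincides with an $\rw^{a\cdot\theta}$-reduction because $a\notin L_i$ (and no reduction rule introduces fresh labels, so $a\notin L_i'$ is preserved); again the first component drops. A $\boxFun$ step at $(!a.M)L_1$ (requiring $m\geq 1$) produces $(!a.(ML_1))L_2\ldots L_m$, to which the IH applies with $M$ replaced by $ML_1$ and one fewer argument, so the first component is unchanged while $m$ strictly decreases. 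A $\boxVoid$ step at $!a.M$ requires $a\notin M$ and yields $ML_1\ldots L_m$; since $a$ now occurs nowhere in this term, $\rw^\theta$ and $\rw^{a\cdot\theta}$ coincide on it and its reducts, so the hypothesis directly transfers to give $ML_1\ldots L_m \in \mathit{SN}^\theta$.

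The one delicate case is a $\plusBox$ step at $!a.M$, applicable when $M = N +_c P$ with $c\neq a$, producing $((!a.N) +_c (!a.P))L_1\ldots L_m$. I would close this case by applying the IH twice to obtain $(!a.N)L_1\ldots L_m,\, (!a.P)L_1\ldots L_m \in \mathit{SN}^\theta$, and then invoking Lemma~\ref{lemma:cloredsum} (noting $c\in\theta$, since $c\in a\cdot\theta$ and $c\neq a$). To apply the IH to $N$ I need $NL_1\ldots L_m \in \mathit{SN}^{a\cdot\theta}$ together with a smaller measure. Using $m$ successive $\plusFun$-steps, $(N+_c P)L_1\ldots L_m \rws_\perm (NL_1\ldots L_m) +_c (PL_1\ldots L_m)$, which is an $\rw^{a\cdot\theta}$-reduction; since $NL_1\ldots L_m$ is a subterm of the right-hand side, we get $\mathit{sn}^{a\cdot\theta}(NL_1\ldots L_m) \leq \mathit{sn}^{a\cdot\theta}(ML_1\ldots L_m)$, and if equality holds then the third component strictly decreases because $|N| < |N+_c P| = |M|$. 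The case for $P$ is symmetric.

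The main obstacle is precisely this $\plusBox$ case: the primary $\mathit{sn}$-measure need not strictly decrease (consider $m=0$ with $P$ already normal), which is why the tertiary size component $|M|$ is indispensable in the lexicographic order; moreover, the reduct no longer has the shape $(!a.\cdot)L_1\ldots L_m$, so we cannot close it by a direct recursive appeal to cloredbox and must instead route through cloredsum after first producing the two sub-boxes via the IH.
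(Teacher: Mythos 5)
Your proof is correct and follows essentially the same route as the paper's: a lexicographic induction showing every one-step reduct is strongly normalizing, with the $\plusBox$ case closed by first invoking the induction hypothesis on the two branches and then appealing to Lemma~\ref{lemma:cloredsum}. The only difference is cosmetic — you order by $(\mathit{sn}^{a\cdot\theta}(ML_1\ldots L_m), m, |M|)$ whereas the paper uses $(m, \sum_i\mathit{sn}^{a\cdot\theta}(L_i)+\mathit{sn}^{a\cdot\theta}(M), |M|)$ — and both measures validate the same case analysis.
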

\begin{proof}
	The proof is structurally very similar to the one of
	Lemma~\ref{lemma:cloredsum}. The lexicographic order
	is however the following one:
	$$
	(m,\sum_{i=1}^m\mathit{sn}^{a\cdot\theta}(L_i)+\mathit{sn}^{a\cdot\theta}(M),|M|)
	$$
	There is one case in which we need to use
	Lemma \ref{lemma:cloredsum}, namely the one in which the
	considered reduction rule is the following:
	$$
	\trm{!b.(P +a Q)}\rw_\perm\trm{(!b.P) +a (!b.Q)}.
	$$
	Since $M=\trm{P +a Q}$ and $ML_1\ldots L_m$ by hypothesis,
	we conclusde that $PL_1\ldots L_m$ and $QL_1\ldots L_m$
	are both strongly normalizing. By induction hypothesis,
	since $\mathit{sn}^{a\cdot\theta}(P),\mathit{sn}^{a\cdot\theta}(Q)\leq\mathit{sn}^{a\cdot\theta}(M)$,
	it holds that $\trm{(!b.P)}L_1\ldots L_m$ and $\trm{(!b.Q)}L_1\ldots L_m$
	are both strongly normalizing themselves. Lemma~\ref{lemma:cloredsum},
	yields the thesis.
\end{proof}

%

\setcounter{lemmaAppendix}{\value{lemma:cloredbeta}}
\begin{lemmaAppendix}\label{lemmaAppendix:cloredbeta}
	The following rule is sound
	$$
	\infer{\trm{(\x.M)}L_0\ldots L_m\in\mathit{SN}^\theta}{\trm{M[L_0/x]}L_1\ldots L_m\in\mathit{SN}^\theta & L_0\in\mathit{SN}^\theta}
	$$
\end{lemmaAppendix}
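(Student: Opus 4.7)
The plan is to adapt the template of the proof of Lemma~\ref{lemma:cloredsum} and proceed by well-founded lexicographic induction on the pair
\[
	\bigl(\mathit{sn}^\theta(\trm{M[L_0/x]L_1\ldots L_m}) + \mathit{sn}^\theta(L_0),\ |M| + |L_0|\bigr),
\]
where $|\cdot|$ denotes term size. The aim is to show that every one-step $\rw^\theta$-reduct of $\trm{(\x.M)L_0 L_1\ldots L_m}$ lies in $\mathit{SN}^\theta$, from which strong normalization of the original term follows.

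The reduction cases split as follows. A head $\beta$-step produces exactly $\trm{M[L_0/x]L_1\ldots L_m}$, which is in $\mathit{SN}^\theta$ by hypothesis. Reductions inside $M$ or inside some $L_i$ lift to a strict reduction of $\trm{M[L_0/x]L_1\ldots L_m}$ (or, when $L_0$ is reduced and $x$ does not occur in $M$, strictly shrink $\mathit{sn}^\theta(L_0)$), so the \ih\ applies directly. This covers the routine cases; the interesting ones are the $\perm$-steps firing at the root.

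There are three such root-level permutations. If $\trm{M = N +a P}$ and $\plusAbs$ fires, the reduct is $\trm{((\x.N) +a (\x.P))L_0\ldots L_m}$; I invoke the \ih\ twice (with strictly smaller $|M|+|L_0|$) to obtain $\trm{(\x.N)L_0\ldots L_m}, \trm{(\x.P)L_0\ldots L_m} \in \mathit{SN}^\theta$, and close via Lemma~\ref{lemma:cloredsum}. If $\trm{M = !a.N}$ and $\boxAbs$ fires, the reduct $\trm{(!a.\x.N)L_0\ldots L_m}$ is dispatched by invoking the \ih\ at the extended label sequence $a\cdot\theta$ (after $\alpha$-renaming $a$ fresh for each $L_i$) and then applying Lemma~\ref{lemma:cloredbox}. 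If some $L_i$ is a choice $\trm{N +a P}$ and $\plusArg$ fires, the \ih\ (with $L_i$ replaced by $N$ or $P$, strictly shrinking the first component) shows both branches of the resulting choice are in $\mathit{SN}^\theta$, and Lemma~\ref{lemma:cloredsum} closes the case.

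The main obstacle is verifying that the \ih's premises genuinely hold in each permutation case. For the $\plusArg$-at-$L_i$ case one must show $\trm{M[L_0/x]L_1\ldots L_{i-1}NL_{i+1}\ldots L_m} \in \mathit{SN}^\theta$, which follows from the original hypothesis by a $\plusArg$/$\plusFun$ permutation step followed by extracting a subterm. For the $\boxAbs$ case one must extract $\trm{N[L_0/x]L_1\ldots L_m} \in \mathit{SN}^{a\cdot\theta}$ from $\trm{(!a.N[L_0/x])L_1\ldots L_m} \in \mathit{SN}^\theta$ via $\boxFun$ permutations, using that $\mathit{sn}^\theta$ transfers through these permutations, analogously to the argument in Lemma~\ref{lemma:cloredbox}. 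Once this bookkeeping is in place, the remaining cases follow the template of Lemma~\ref{lemma:cloredsum}.
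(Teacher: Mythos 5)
Your proof follows essentially the same route as the paper's: the same lexicographic induction on $(\mathit{sn}^\theta(\trm{M[L_0/x]}L_1\ldots L_m)+\mathit{sn}^\theta(L_0),\,|M|)$ (your extra $|L_0|$ in the second component is harmless), the same case split on one-step reducts of $\trm{(\x.M)}L_0\ldots L_m$, and the same appeals to Lemma~\ref{lemma:cloredsum} and Lemma~\ref{lemma:cloredbox} to close the root-level $\plusAbs$ and $\boxAbs$ permutations. If anything you are slightly more explicit than the paper, which leaves the $\plusArg$-at-argument case and the $\mathit{SN}^{a\cdot\theta}$ bookkeeping implicit.
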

\begin{proof}
	Again, the proof is structurally very similar to the one
	of Lemma~\ref{lemma:cloredsum}. The underlying order,
	needs to be slightly adapted, and is the lexicographic
	order on
	\begin{equation}\label{equ:redordbet}
	(\mathit{sn}(\trm{M[L_0/x]}L_1\ldots L_m)+\mathit{sn}(L_0),|M|)
	\end{equation}
	As usual, we proceed by showing that all terms to which
	$\trm{(\x.M)}L_0\ldots L_m$ reduces are strongly normalizing:
	\begin{itemize}
		\item
		If reduction happens in $L_0$, then we can mimick
		the same reduction in by zero or more reduction
		steps in $\trm{M[L_0/x]}L_1\ldots L_m$, and conclude
		by induction hypothesis, because the first component
		of (\ref{equ:redordbet}) strictly decreases.
		\item
		If reduction happens in $M$ or in $L_1,\ldots,L_m$,
		then we can mimick the same reduction in one or more
		reduction in $\trm{M[L_0/x]}L_1\ldots L_m$, and conclude
		by induction hypothesis since, again, the first component
		of (\ref{equ:redordbet}) strictly decreases.
		\item
		If the reduction step we perform reduces
		$\trm{(\x.M)}L_0$, then the thesis follows from the
		hypothesis about $\trm{M[L_0/x]}L_1\ldots L_m$.
		\item
		If $M$ is in the form $\trm{P +a Q}$ and
		the reduction step we perform reduces
		$\trm{(\x.M)}L_0\ldots L_m$
		to $\trm{((\x.P)+a(\x.Q))}L_0\ldots L_m$,
		we proceed by observing that
		$\trm{(\x.P)}L_0\ldots L_m$
		and $\trm{(\x.Q)}L_0\ldots L_m$ are
		both in $\mathit{SN}^\theta$ and we
		can apply the induction hypothesis to them,
		because the first component of (\ref{equ:redordbet})
		stays the same, but the second one strictly decreases.
		We then obtain that
		$\trm{P[x/L_0])}L_1\ldots L_m$
		and $\trm{Q[x/L_0]}L_1\ldots L_m$ are both
		in $\mathit{SN}^\theta$, and from Lemma~\ref{lemma:cloredsum}
		we get the thesis.
		\item
		If $M$ is in the form $\trm{!a.P}$ and
		the reduction step we perform reduces
		$\trm{(\x.M)}L_0\ldots L_m$ to
		$\trm{!a.(\x.P)}L_0\ldots L_m$. We can
		first of all that we can assume that
		$a\not\in L_i$ for every $0\leq i\leq m$.
		We proceed by observing that
		$\trm{(\x.P)}L_0\ldots L_m$ is in $\mathit{SN}^\theta$
		and we can apply the \ih\ to it, because
		the first component of (\ref{equ:redordbet})
		stays the same, but the second one strictly decreases.
		We then obtain that
		$\trm{P[x/L_0]}L_1\ldots L_m$
		is in $\mathit{SN}^\theta$, and from Lemma~\ref{lemma:cloredbox}
		we get the thesis.
		\qedhere
	\end{itemize}
\end{proof}

\end{document}